\newtheorem{prop}{Proposition}
\definecolor{veryorange}{RGB}{213,89,46}
\definecolor{orange}{RGB}{222,119,66}
\definecolor{peach}{RGB}{231,155,70}
\definecolor{beige}{RGB}{247,219,92}
\definecolor{darkgreen}{RGB}{81,112,30}
\definecolor{quantumviolet}{HTML}{53257F} 
\definecolor{qpurple}{HTML}{946CBA} 
\definecolor{qviolet}{HTML}{8888C6} 
\definecolor{qbleuclair}{HTML}{6ab5db}
\definecolor{qbleugris}{HTML}{7aa0d2} 
\definecolor{qbleuunpeumoinsclair}{HTML}{58c4e1}
\definecolor{qgris}{HTML}{6e7075}
\definecolor{noncontextual}{RGB}{189,206,226}
\definecolor{npal}{RGB}{185,218,232}
\definecolor{nonsignalling}{RGB}{201,236,244}
\newcommand{\CF}{\mathsf{CF}}
\newcommand{\SF}{\mathsf{SF}}
\newcommand{\NS}{\mathcal{NS}}
\newcommand{\NC}{\mathcal{NC}}
\newcommand{\SCHSH}{S_{\textrm{CHSH}}}
\begin{document}

\title{Certified randomness in tight space}

\author{Andreas Fyrillas}
\thanks{These authors contributed equally to this work.}
\affiliation{Quandela, 7 Rue Léonard de Vinci, 91300 Massy, France}

\author{Boris Bourdoncle}
\thanks{These authors contributed equally to this work.}
\affiliation{Quandela, 7 Rue Léonard de Vinci, 91300 Massy, France}

\author{Alexandre Maïnos}
\affiliation{Quandela, 7 Rue Léonard de Vinci, 91300 Massy, France}
\affiliation{Quantum Engineering Technology Labs, H. H. Wills Physics Laboratory and Department of Electrical and Electronic Engineering, University of Bristol, Bristol, BS81FD, UK}

\author{Pierre-Emmanuel Emeriau}

\author{Kayleigh Start}
\affiliation{Quandela, 7 Rue Léonard de Vinci, 91300 Massy, France}

\author{Nico Margaria}
\affiliation{Quandela, 7 Rue Léonard de Vinci, 91300 Massy, France}

\author{Martina Morassi}
\affiliation{Universit\'e Paris-Saclay, CNRS, Centre de Nanosciences et de nanotechnologies, 91120, Palaiseau, France}

\author{Aristide Lema\^itre}
\affiliation{Universit\'e Paris-Saclay, CNRS, Centre de Nanosciences et de nanotechnologies, 91120, Palaiseau, France}

\author{Isabelle Sagnes}
\affiliation{Universit\'e Paris-Saclay, CNRS, Centre de Nanosciences et de nanotechnologies, 91120, Palaiseau, France}

\author{Petr Stepanov}
\affiliation{Quandela, 7 Rue Léonard de Vinci, 91300 Massy, France}

\author{Thi Huong Au}
\affiliation{Quandela, 7 Rue Léonard de Vinci, 91300 Massy, France}

\author{S\'ebastien Boissier}
\affiliation{Quandela, 7 Rue Léonard de Vinci, 91300 Massy, France}

\author{Niccolo Somaschi}
\affiliation{Quandela, 7 Rue Léonard de Vinci, 91300 Massy, France}

\author{Nicolas Maring}
\affiliation{Quandela, 7 Rue Léonard de Vinci, 91300 Massy, France}

\author{Nadia Belabas}
\thanks{These authors jointly supervised this work.}
\affiliation{Universit\'e Paris-Saclay, CNRS, Centre de Nanosciences et de nanotechnologies, 91120, Palaiseau, France}

\author{Shane Mansfield}
\thanks{These authors jointly supervised this work.}
\affiliation{Quandela, 7 Rue Léonard de Vinci, 91300 Massy, France}

\begin{abstract}

Reliable randomness is a core ingredient in algorithms and applications ranging from numerical simulations to statistical sampling and cryptography. The outcomes of measurements on entangled quantum states can violate Bell inequalities, thus guaranteeing their intrinsic randomness. This constitutes the basis for certified randomness generation. However, this certification requires spacelike separated devices, making it unfit for a compact apparatus. Here we provide a general method for certified randomness generation on a small-scale application-ready device and perform an integrated photonic demonstration combining a solid-state emitter and a glass chip. In contrast to most existing certification protocols, which in the absence of spacelike separation are vulnerable to loopholes inherent to realistic devices, the protocol we implement accounts for information leakage and is thus compatible with emerging compact scalable devices. We demonstrate a 2-qubit photonic device that achieves the highest standard in randomness yet is cut out for real-world applications. The full 94.5-hour-long stabilized process harnesses a bright and stable single-photon quantum-dot based source, feeding into a reconfigurable photonic chip, with stability in the milliradian range on the implemented phases and consistent indistinguishability of the entangled photons above 93\%. Using the contextuality framework, we certify private randomness generation and achieve a rate compatible with randomness expansion secure against quantum adversaries. 

\end{abstract}

\maketitle

\section{Introduction}

The strictest requirements on randomness sources are typically destined to cryptographic applications. There, randomness should ideally be both unpredictable and private, so that no information about the generated sequence can be gained by an eavesdropper either prior to or immediately after its generation. Quantum sources admit certification of these properties, by exploiting links between the unpredictability of quantum behaviour and the violation of Bell inequalities. A guarantee that numbers have been sampled from empirical data exhibiting Bell nonlocality \cite{BCP+2014Bell} or, more generally, contextuality \cite{kochen1975, AB2011Sheaf, CSW2014Graph, AFL+2015Combinatorial} can suffice to certify unpredictability and privacy~\cite{Colbeck2007Quantum,PAM+2010Random}.

Randomness certification and other Bell inequality based protocols offer attainable practical advantages for quantum information processing with relatively low numbers of qubits, but they are nevertheless susceptible to loopholes \cite{Larsson2014Loopholes}. One way to close the locality, or more generally, the compatibility loophole is to ensure space-like separation between the players of the non-local game~\cite{BKG+2018Experimentally, LZL+2018Device, ZSB+2020Experimental, LZL+2021Experimental, SZB+2021Device}. However, that is not an option for a practical compact device. Merely asserting that the relevant parts of the device are shielded~\cite{LLR+2021Device} is unsatisfactory for users who would like to prevent themselves against a device deteriorating with time. For such a device, the compatibility loophole must be carefully addressed, because crosstalk can lead to detrimental information flow between components. This compromises theoretical analyses and security proofs even outside of adversarial scenarios. More broadly, all future on-chip quantum information processing will be susceptible to such effects, for which reason it is essential that they be taken into account in protocols and algorithms at the information processing level.

In this work, we introduce novel theoretical tools to address the locality loophole, which we demonstrate in a randomness certification protocol performed on a compact 2-qubit photonic processor. Idealised analyses typically lead to relations between relevant figures of merit (such as fidelities, rates, and guessing probabilities) on the one hand, and Bell violations or more general contextuality measures \cite{ABM2017Contextual} on the other. Here, however, we provide relations suited to realistic devices, which allow the evaluation of the relevant figures of merit in terms of both beneficial contextuality and detrimental crosstalk. Moreover, we introduce a method to upper bound the amount of crosstalk by computing how far the device's observed behaviour is from the set of quantum correlations approximated by the Navascués--Pironio--Acín (NPA) hierarchy \cite{NPA2007Bounding}. This enables detection of adversarial manipulation of the device which may seek to exploit the locality loophole to spoof certification.

We propose a certification method that is secure against quantum side information, meeting the highest security standards. This method requires acquiring large statistics while maintaining high photon purity and indistinguishability \cite{Marcikic2004, Gonzalez2022}, which puts knock-on constraints on hardware efficiency and stability. Our theoretical contribution bridging the gap between ideal situations and realistic implementations, combined with finely controlled and robust hardware, allow us to implement the first on-chip certified quantum random number generation protocol with full security proof.

\begin{figure*}
\begin{subfigure}[c]{0.4\textwidth}
    \includegraphics[width=\textwidth]{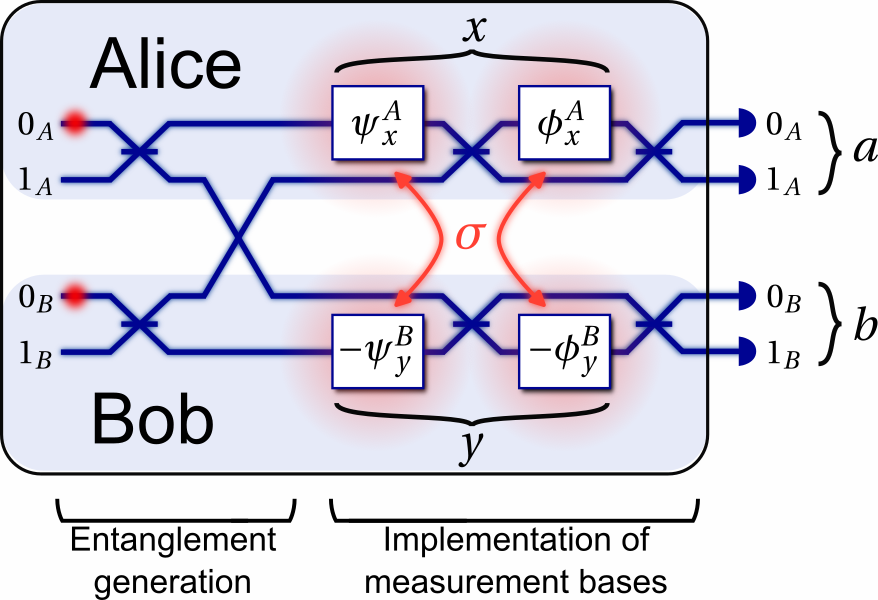}
    \caption{}
    \label{subfig:chip_scheme}
\end{subfigure}
\qquad
\begin{subfigure}[c]{0.3\textwidth}
    \includegraphics[width=\textwidth]{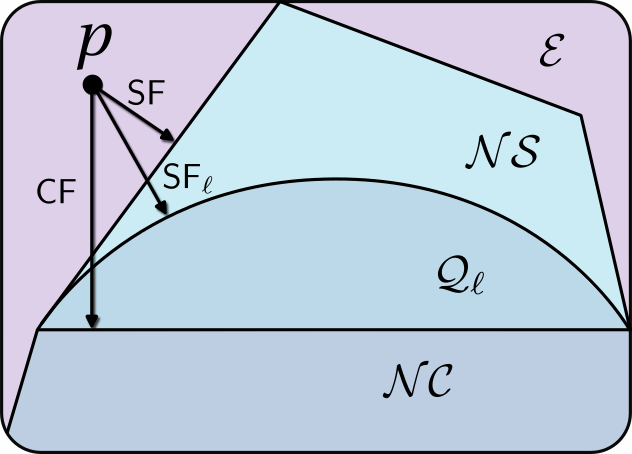}
    \caption{}
    \label{fig:NSSets}
\end{subfigure}
\begin{subfigure}[c]{0.25\textwidth}
    \begin{tikzpicture}[scale=0.5]
\draw[fill=noncontextual] plot coordinates{(0,1) (0,3) (1,3) (1,1)  }--cycle;
\draw[fill=npal] plot coordinates{(0,3) (0,5.5) (1,5.5) (1,3)  }--cycle;
\draw[fill=nonsignalling] plot coordinates{(0,5.5) (0,8.5) (1,8.5) (1,5.5)  }--cycle;
\node at (0,-0.4) {};
\node at (-3,0) {};
\node at (2.9,3) {$S_{\mathrm{cl}}=3/4$};
\node at (3.7,4) {$S_{\mathrm{cl}}^{\sigma}=(3+\sigma)/4$};
\node at (3.7,5.5) { $S_{\mathcal{Q}}=\cos^2(\pi/8)$};
\node at (2.8,8.5) {$S_{\mathcal{NS}}=1$};
\node at (-1.6,5.1) {$\SCHSH(p)$};
\draw (1,8.5) -- (1.5,8.5);
\draw (1,5.5) -- (1.5,5.5);
\draw (1,3) -- (1.5,3);
\draw[] (0,4) -- (1.5,4);
\draw[very thick] (-0.3,5.1) -- (1,5.1);
\fill[pattern={Dots[angle=45, radius=0.4mm]}] (0,4)--(0,8.5)--(1,8.5)--(1,4)--cycle;
\end{tikzpicture}
    \caption{}
    \label{subfig:scores}
\end{subfigure}
\caption{ 
\textbf{Bell test on a compact device.} 
    \textbf{(a)} CHSH game on a photonic chip. The glass chip uses two dual-rail encoded qubits: modes $0_A$ and $1_A$ correspond to Alice's qubit, modes $0_B$ and $1_B$ to Bob's qubit. The thick lines represent waveguides, the cross with horizontal lines represent symmetric beamsplitters.  
    The first part of the chip generates the Bell state $\frac{1}{\sqrt{2}}(\ket{0_A 1_B} + \ket{1_A 0_B})$ after post-selection on detection events to project into the dual-rail qubit basis (see Appendix~\hyperref[app:chsh_hom]{D} for details). The second part enables the agents to select the inputs $(x,y)$ of the Bell test, i.e. the measurement bases, by applying phases via thermo-optic phase shifters. $\sigma$ represents the crosstalk that can occur between Alice and Bob. The outputs $(a,b)$ correspond to where the photons are detected at the exit of the chip. The distributions of the outputs conditioned on the input is called the behaviour, denoted $p$. \textbf{(b)} Sketch of the set $\mathcal{E}$ of all possible behaviours and its subsets. The behaviour $p$ underlying the statistics of our implementation lies outside the no-signalling polytope $\NS$ because $\sigma$ is nonzero. $\SF$, $\CF$ and $\SF_{\ell}$ quantify how far $p$ is from, respectively, the no-signalling set $\NS$, the non-contextual set $\NC$, and the approximation of the quantum set defined by the $\ell$-th level of the Navascues-Pironio-Acín hierarchy $\mathcal{Q}_l$.  \textbf{(c)} Maximal scores for the CHSH game. Thanks to our bound relating contextuality and signalling, we know that a score above $S_{\mathrm{cl}}^{\sigma}$ (dotted region) cannot be achieved with a non-contextual behaviour even when an amount $\sigma$ of signalling is allowed: above that threshold, the behaviour $p$ has to be contextual and can be used to certify randomness. $S_{\mathrm{cl}}, S_{\mathcal{Q}}$ and $S_{\mathcal{NS}}$ are the maximal scores over, respectively, the non-contextual, the quantum and the no-signalling sets.}
\label{fig:crosstalk}
\end{figure*}

\section{Maximal scores in realistic contextual games}
\label{sec:theory}

In our protocol, two parties play the Clauser-Horne-Shimony-Holt (CHSH) game \cite{CHSH1969Proposed,CHTW2004Consequences} to guarantee the generation of randomness. The protocol is divided between test rounds that assess a Bell inequality violation and generation rounds that produce random numbers. During test rounds, Alice and Bob each perform one of two measurements
$x \in \{0,1\}$ for Alice and $y \in \{0,1\}$ for Bob. This yields one of two outputs
$a \in \{0,1\}$ for Alice and $b \in \{0,1\}$ for Bob. Following the terminology of contextuality, we call a set of measurement choices (or set of `inputs') a `context'. Our corresponding implementation is sketched in Fig.~\ref{subfig:chip_scheme} and detailed in Fig.~\ref{fig:setup}.

The winning condition for the CHSH game for a context $(x,y)$ and joint results $(a,b)$ is $a \oplus b = x\cdot y$. The probability of obtaining $(a,b)$ when measuring $(x,y)$ is denoted $p(ab|xy)$, and the set of the four conditional distributions $\{p(a,b|x,y)\}$ forms the behaviour of our device, denoted $p$. Its CHSH score $S_{\textrm{CHSH}}(p)$ and its CHSH inequality violation~\cite{CHSH1969Proposed} $I_{\textrm{CHSH}}(p)$ are defined as:
\begin{align}
    &S_{\textrm{CHSH}}(p)=\sum_{a,b,x,y} p(x,y) V(a,b,x,y) p(ab|xy), \label{eq:CHSH_ineq}
    \\
    &I_{\textrm{CHSH}}(p)=\expval{A_0B_0}+\expval{A_0B_1}+\expval{A_1B_0}-\expval{A_1B_1}, 
\label{eq:CHSH_score}
\end{align}
where $p(x,y)$ is the distribution of the measurement choices, $V$ is the scoring function for the game, defined by $V(a,b,x,y)=1$ if $a \oplus b = x\cdot y$ and $V(a,b,x,y)=0$ otherwise, and where $\expval{A_xB_y}$ $= p(a=b|x,y) - p(a\neq b|x,y)$. When the input distribution is uniform, $S_{\textrm{CHSH}}=(I_{\textrm{CHSH}}+4)/8$.
For a non-signalling behaviour, a value in the range $\SCHSH(p)>0.75$ ($I_{\textrm{CHSH}}(p)>2$) is a signature that $p$ is contextual.
For a given experimental setup, the set of all possible behaviours is often represented within a geometric space \cite{BCP+2014Bell}, in which behaviours with relevant properties typically form polytopes or convex subsets Fig.~\ref{fig:NSSets}.

In Bell-based quantum information processing, the key element to examine the intrinsic properties of a behaviour is its decomposition into hidden-variable models (HVMs). Operationally, one can think of an HVM as a description of the devices held by an eavesdropper that reproduces the observed behaviour, thus giving the eavesdropper more predictive power while being indistinguishable from the original behaviour for the users. A behaviour is non-contextual if it admits a factorisable HVM~\cite{AB2011Sheaf}, and we quantify the departure from the set of non-contextual behaviours with the contextual fraction $\CF$~\cite{ABM2017Contextual}. When no-signalling is guaranteed, a behaviour is non-contextual if and only if it can be decomposed as a mixture of deterministic models~\cite{Fine1982Hidden}. Conversely, this means that the underlying behaviour is inherently nondeterministic, and randomness can be certified from it, when one observes a score $S(p)$ higher than the non-contextual (or `classical') maximum $S_{\mathrm{cl}}$, defined as: 
\begin{equation} 
S_{\mathrm{cl}} := \max_p \{ S(p) \, \mid \, \CF(p) = 0 \} \, .
\label{eq:Scl}
\end{equation}
Note that the score $S(p)$ we define here is not limited to the CHSH game: this definition applies more generally to any contextual game, which can have more than two measurement choices, more than two measurement results, more than two players, and any scoring function mapping the input-output tuples to $\{0,1\}$. One can also define $S^{\xi}_{\textrm{cl}}$, maximal score over HVMs with contextual fraction at most $\xi$: 
\begin{equation}
S_{\mathrm{cl}}^{\xi} := \max_p \{S(p) \, \mid \, \CF(p) \leq \xi \}\,;
\label{eq:Sxi}
\end{equation}
and relate it to the consistency of the game~\cite{AH2012Logical}.

For every context, the scoring function of a game defines a logical formula on the outputs that has to be satisfied to win the game (for the CHSH game, $a \oplus b =1$ for inputs $(1,1)$ and $a \oplus b =0$ for the other inputs). A game is said to be $k$-consistent if at most $k$ formulae can be simultaneously satisfied by an assignment of the outputs. For CHSH, $a=b=0$ satisfies the formulae defined by all inputs except $(1,1)$, and that is the best one can do: the CHSH game is 3-consistent. 

\begin{prop}[Theorem 4 in~\cite{ABM2017Contextual}\footnote{Note that there is a typo in the statement of Theorem 4 in \cite{ABM2017Contextual}: the denominator on the right-hand side should be $n$, not $k$, which is equal to $m$ in our case.}]
Let $S_{\mathrm{cl}}^{\xi}$ be the maximal score for a contextual game with $m$ measurement choices that is $k$-consistent. Then 
\begin{equation}
    S_{\mathrm{cl}}^{\xi} \leq \frac{k+\xi(m-k) }{m}.
\label{eq:SclBound}
\end{equation}
\label{prop:SclBound}
\end{prop}

Unfortunately, the measurements performed in practical scenarios and on devices without space-like separation are not perfectly compatible~\cite{GHK+2021Incompatible}, which implies that the no-signalling conditions~\cite{GRW1980General} are not met and the tools that were developed to certify randomness~\cite{Colbeck2007Quantum, PAM+2010Random, MS2017Universal} do not apply in a straightforward way. To overcome this limitation, the set of admissible HVMs needs to be extended. One can for instance admit more general measurements~\cite{SPM2013Device} or more general behaviours~\cite{UZZ+2020Randomness}. Here, we allow a limited amount of signalling at the hidden-variable level, and we compute new maximal scores achievable over these models. More precisely, we define the signalling fraction $\SF$ (simultaneously introduced in~\cite{EBMM2022Corrected}) to quantify how signalling a behaviour is. Like the contextual fraction, $\SF$ is the solution of a linear program and can thus be computed efficiently. Its formal definitions can be found in Appendix \hyperref[app:fractions]{A}. We then define $S^{\sigma}_{x,y}$, maximal score over HVMs with signalling fraction at most $\sigma$, and the further requirement that the HVMs be deterministic on a specific context $(x,y)$, called distinguished context:
\begin{align}
S^{\sigma}_{x,y} := \max_p \{S(p) \mid & \ \exists \ (a,b). \, p(ab|xy)=1  \\
& \ \mathrm{ and } \ \, \SF(p) \leq \sigma\}\, , \nonumber
\label{eq:Ssigmaxy}
\end{align} 

The score with a distinguished context is the relevant quantity to certify a lower bound on randomness against quantum side information using the techniques introduced by Miller and Shi~\cite{MS2017Universal}. We extend the score definition to the case of nonzero signalling ($\sigma>0$), in order to design a protocol for realistic devices, and we relate $S^{\sigma}_{x,y}$ to $S_{\mathrm{cl}}^{\xi}$ with the following proposition. 

\begin{prop} 
For a contextual bipartite game with binary input choices for both players, let $(x,y)$ be a distinguished context. Then 
\begin{equation}
    S^{\sigma}_{x,y} \leq S^{\xi = \sigma}_{\mathrm{cl}}.
\label{eq:SclSxi}
\end{equation}
\label{prop:SclSxi}
\end{prop}
The proof can be found in Appendix~\hyperref[app:scores]{B}, where the bound is formulated in the more general case of $n$-partite games with binary inputs. This relation can be interpreted as follow: $S^{\sigma}_{x,y}$ is an extension of the distinguished context score to behaviours with signalling fraction at most $\sigma$; $S^{\xi}_{\mathrm{cl}}$ is an extension of the classical score to behaviours with contextual fraction at most $\xi$; the latter is an upper bound on the former, with $\xi=\sigma$. In other words, the extent to which an amount of signalling $\sigma$ can improve the distinguished context score is bounded by the extent to which an amount $\xi=\sigma$ of contextuality can improve the classical score. For our protocol to generate randomness, one has to observe a score above $S^{\sigma}_{x,y}$. In the CHSH case, combining Propositions~\ref{prop:SclBound} and~\ref{prop:SclSxi}, this means that $\SCHSH(p)>\frac{3+\sigma}{4}$ is a sufficient condition, as depicted in Fig.~\ref{subfig:scores}. Compared to Ref. \cite{UZZ+2020Randomness}, which also uses the spot-checking security proof of Miller and Shi~\cite{MS2017Universal} and takes into account imperfect compatibility between the measurements, our analysis is more general, as it applies to any $n$-partite nonlocal games with binary inputs.

\section{Estimating physical crosstalk $\sigma$}

\begin{figure*}
\includegraphics[width=\textwidth]{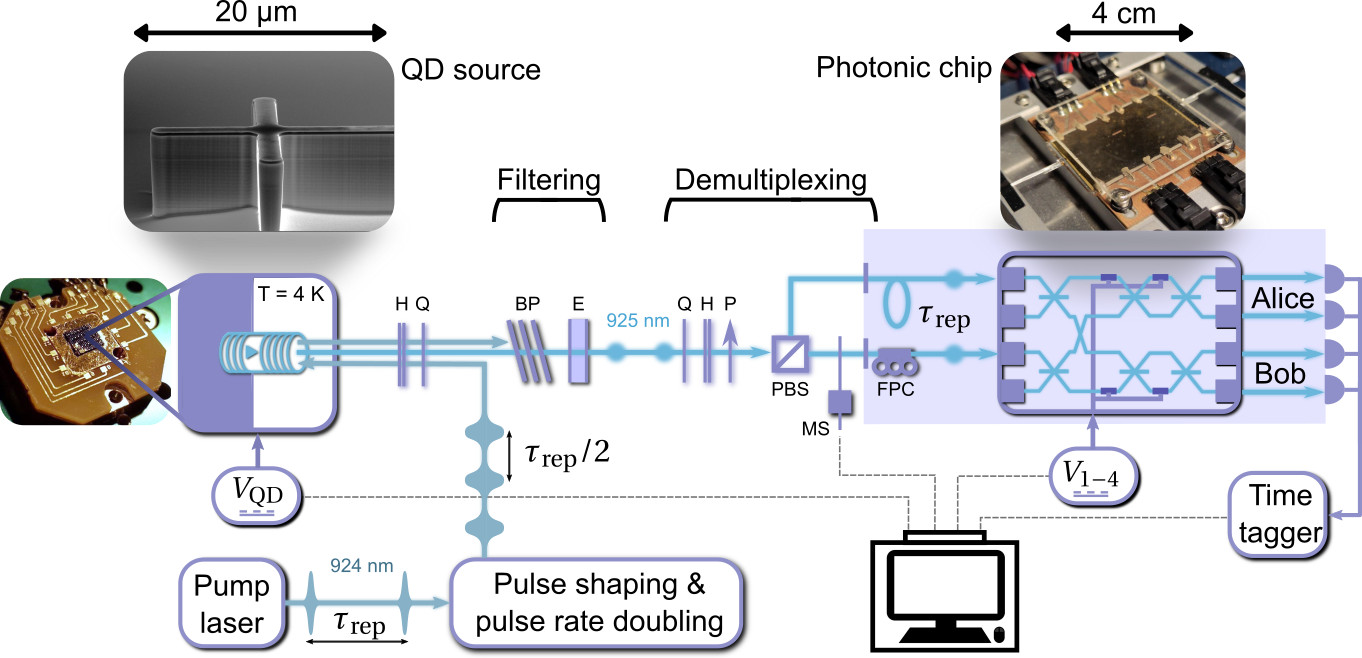}
\caption{
    \textbf{Compact implementation of a certified quantum random number generator.}
    The quantum dot (QD) photon emitter generates photons at 925 nm via a phonon-assisted excitation scheme (see Methods). H, Q: half and quarter-wave plates. BP:  bandpass filters. E: etalon. P: polarizer. After a demultiplexing stage (see Methods), the outputs of a polarizing beamsplitter (PBS) are collected with collimators. The setup is entirely fibered or waveguided in the blue area. A fibered delay $\tau_{\text{rep}}$, allows to synchronize pairs of photons sent into the photonic chip. A motorized shutter (MS) enables chip voltage calibration. The fibered polarisation controller (FPC), ensures both photons enter the photonic chip with the same polarisation. Dashed grey lines indicate that elements of the setup are automated to implement the randomness generation protocol, by adapting the voltage on the photon source for optimal brightness and periodic calibrations of the thermo-optic phase-shifter voltages. $V_{1-4}$ control the phases on chip and hence measurement bases of Alice and Bob. $V_\text{QD}$ feedback loop ensures the QD emission remains bright and the emitted photons indistinguishable.
    }
\label{fig:setup}
\end{figure*}

In order to define the set of admissible HVMs associated to our implementation, we allow a limited amount of information between the subsystems of Alice and Bob, due to crosstalk between the components (see Fig.\ref{fig:crosstalk}). This means that the admissible HVMs can have a positive signalling fraction $\sigma$, for some $\sigma$ that can be characterised either by an upstream partial characterisation of the devices or through an on-the-fly estimation based on the statistics of the inputs and outputs collected during the protocol. The first approach requires a device-dependent physical analysis of the setup, while the second approach is semi-device-independent: it derives a crosstalk estimate only from the observed input-output correlations, hence its device-independent characterisation, and then relate it to the underlying crosstalk via an assumption on the device. We follow the second one here and assume that $\sigma$ is related to the amount of signalling observed empirically. This assumption is well-founded if the devices were fabricated by an honest provider, i.e. were not programmed to act maliciously in order to function with a high level of crosstalk while keeping the empirically observable signalling low. In that case, an eavesdropper can only take advantage of flaws in the implementation and deterioration of the devices with time to try and predict the outputs. 

More precisely, we assume that our implementation obeys the laws of quantum mechanics, and we thus take $\sigma = \mathsf{SF}_{\ell}$, where $\mathsf{SF}_{\ell}$ is the extension of $\SF$ obtained by replacing the non-signalling set with the set of quantum correlations approximated at the $\ell^{th}$ level of the NPA hierarchy~\cite{NPA2007Bounding, NPA2008Convergent}. The reasoning for that choice is the following: if we instead took $\sigma = \mathsf{SF} \leq \SF_{\ell}$, an adversarial quantum strategy consisting in preparing a distribution $p_{\mathrm{adv}}$ such that $S(p_{\mathrm{adv}})$ is high, $\SF(p_{\mathrm{adv}})$ is low and the output distribution for the distinguished context is biased for the benefit of the adversary wouldn't be discarded by our analysis, while preparing such a behaviour quantumly requires more than $\sigma$ signalling. 

We give the formal definition of $\mathsf{SF}_{\ell}$ in Appendix~\hyperref[app:fractions]{A}, along with some properties of this measure and a comparison to the crosstalk measure introduced by Silman et al. in~\cite{SPM2013Device}. 

\section{Protocol for certified randomness generation}
\label{sec:MS}

We build upon Miller and Shi's spot-checking random number generation protocol~\cite{MS2017Universal} to compute a lower bound on the min-entropy of the total string of bits obtained during the execution of our protocol. This protocol is valid for general contextual games and is secure in the presence of the most general kind of information accessible to an eavesdropper, i.e.\ quantum side information. The complete description of this protocol is given in Appendix~\hyperref[app:protocol]{C}, along with the assumptions required for the min-entropy bound to be valid.  In that Appendix, we also describe how our idea can be combined with a protocol for randomness certification against classical side information such as the ones introduced in~\cite{PM2013Security, FGS2013Security, NBSP2018Device}, because in the trusted provider scenario this restriction on the side information is reasonable.
 
When the protocol succeeds, the min-entropy of the output sequences $\mathbf{AB}$ conditioned on the inputs sequence $\mathbf{XY}$ and all information potentially available to an eavesdropper $E$~\cite{Renner2005Security} satisfies 
\begin{equation}
 H^{\delta}_{\textrm{min}}(\mathbf{AB}|\mathbf{XY},E)  \geq N (\pi(\chi)-\Delta).
\label{eq:BoundMain}
\end{equation}
 
$H^{\delta}_{\textrm{min}}$ quantifies how many bits can be extracted from the outputs sequence of $N$ bits, such that these sifted bits are uniformly random and uncorrelated to the quantum side information held by the eavesdropper. $\chi$ is the score threshold fixed prior to the execution of the protocol, $\pi$ is the rate curve, and $\Delta$ is the correction term, whose expressions are given in Appendix~\hyperref[app:protocol]{C}.

\section{Description of our setup}
\label{sec:exp}

A scheme of the setup we use to implement our certified randomness generation is provided in Fig~\ref{fig:setup}: an electrically controlled semiconductor quantum dot in a 2-µm-diameter micropilar cavity generates single photons that are sent to a reconfigurable glass chip, implementing the CHSH game by varying the measurement context via optical phases and measuring output  coincidences. 

By a periodical calibration during the experiment, we maintain a high precision over the implemented measurement bases to limit signalling between the two agents, quantified by $\SF_{\ell}$. We use a bright and stable Quandela semiconductor quantum dot (QD) based single photon source \cite{Somaschi2016} that delivers indistinguishable single photons, allowing us to obtain high Bell inequality violations.

The polarized fibered-device brightness of our QD-based single-photon source, i.e.\ the probability to detect after the filtering stage with a polarizer an emitted photon following an excitation pulse, is $\SI{8.3(8)}{\%}$ (all error bars represent one standard deviation).

We quantify the purity of the single photons, i.e.\ the proportion of $\ket{1}$ Fock states compared to $\ket{2}$, with the second-order normalized correlation function $g^{(2)}(0)\approx \SI{2.31(3)}{\%}$ \cite{Loudon2000} and their indistinguishability with the Hong-Ou-Mandel (HOM) visibility $V_{\text{HOM}}=\SI{93.09(4)}{\%}$ \cite{Ollivier2021}.
The train of emitted photons is converted with a passive demultiplexing stage (see Methods) into pairs of photons entering simultaneously the photonic chip. 

On exiting the chip, the photons are detected by high-efficiency single photon detectors and time tagged. The overall transmission of the setup, i.e.\ the probability that a pump pulse results in a detected photon, is $\SI{2.7}{\%}$. 

The details about the source, the chip and the selection of the measurement bases can be in found in Methods. 

\section{Relation between photon distinguishability and CHSH violation}

\begin{figure}[h]
    \centering
    \includegraphics[width=\textwidth]{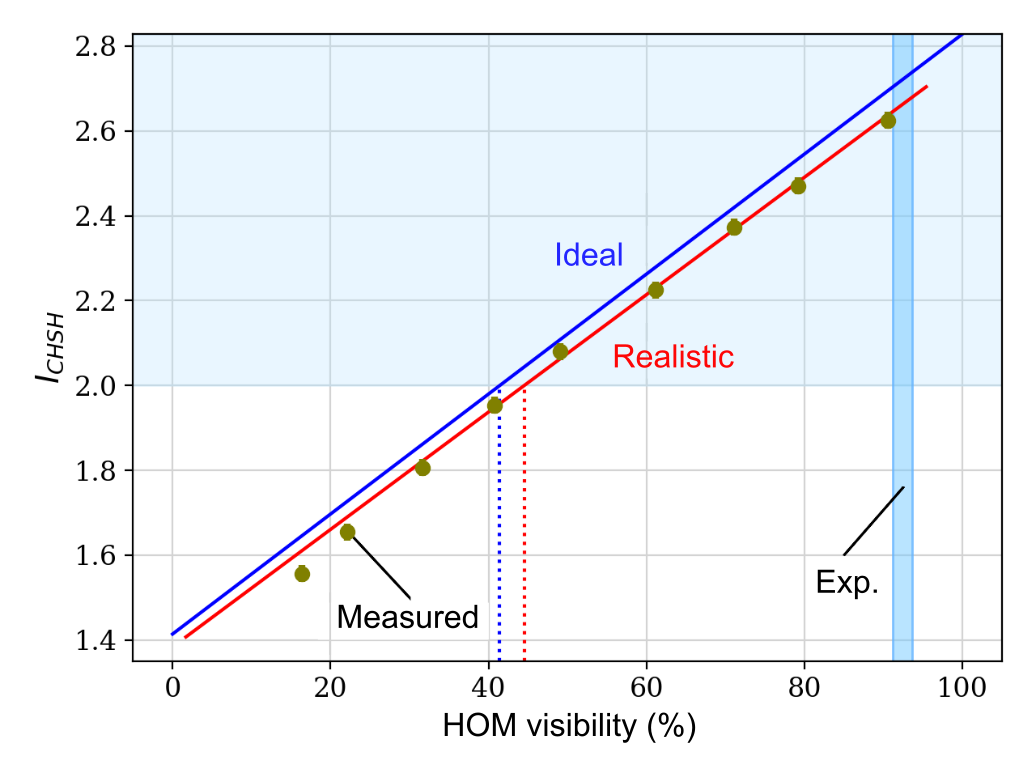}
    \caption{\textbf{CHSH expression value $I_\text{CHSH}$ as a function of HOM visibility} 
    Blue line: analytical dependence derived for a photon source emitting only pure single photons $g^{(2)}(0)=0$ in a lossless circuit  T=100\%. Red line: realistic curve simulated with the Perceval package for $g^{(2)}(0)=0.023$ and an optical circuit transmission of T=2.7 \% which are the experimental values. The blue and red dotted lines indicate the minimum HOM visibility required to certify quantum correlations for the ideal and realistic case respectively. The minimum values are respectively $\approx \SI{41.4}{\%}$ and $\approx \SI{44.5}{\%}$. Green dots: measured data points. The HOM visibility is decreased by changing the polarisation of one of the two photons entering the photonic chip. The error bars extending over $\pm 0.02$ representing one standard deviation are contained in the plot markers. Light blue area indicates CHSH violations certifying quantum correlations in the no-signaling case. Blue vertical bar indicates the range of HOM visibility values measured during our 94.5 hours main experiment. }
    \label{fig:chsh_vs_hom}
\end{figure}

Achieving certified randomness requires witnessing correlations that violate the CHSH inequality.
This places requirements on the purity and indistinguishability of the photons emitted by the single-photon source. For the ideal case of a source emitting only pure photons in a lossless optical circuit, we derive the relation 
\begin{equation}
I_\mathrm{CHSH} = \sqrt{2} (V_\mathrm{HOM}+1),
\end{equation}
where $V_\mathrm{HOM}$ is the HOM visibility of the photons. We derive this relation in Appendix~\hyperref[app:chsh_hom]{D}. We simulate the relationship between $I_\text{CHSH}$ and $V_\text{HOM}$ taking into account photon impurity and circuit losses with Perceval, a software platform specialised in simulations of photonic circuits in the discrete variable paradigm \cite{Heurtel2022}. In our case, the main causes of photon distinguishability arise from polarisation fluctuations in the optical fibers and charge noise around the QD \cite{Kuhlmann2013}. 

The comparison of the CHSH value obtained with our experimental setup and the simulated one is given in Fig.~\ref{fig:chsh_vs_hom}. We acquire $I_\text{CHSH}$ experimentally as a function of $V_\text{HOM}$. The photon distinguishability is adjusted by manually shifting the polarisation of one of the two photons entering the photonic chip with a fibered polarisation controller (see FPC in Fig. \ref{fig:setup}). 
$V_\text{HOM}$ is measured independently from $I_\text{CHSH}$ by setting Alice's on-chip interferometer phases $\psi^A=0$ and $\phi^A=-\pi/2$ on Fig.\ \ref{fig:setup}. Her interferometer then implements the following unitary matrix on the first two spatial modes (up to a global phase, see "Selection of the measurement bases" in Methods)

\begin{equation}
    \frac{1}{\sqrt{2}}
    \left[\begin{array}{ll}
    1 &
    1 \\
    1 &
    -1
    \end{array}\right],
\end{equation}
which is a 50:50 beamsplitter matrix like in the canonical Hong-Ou-Mandel experiment. The HOM visibility is then $V_\text{HOM} = 1 - 2p_\text{coinc}$ where $p_\text{coinc}$ is the probability of detecting simultaneously a photon on both of Alice's detectors in this configuration. The close match between simulated and measured data points validates the performance of our setup.

\begin{figure*}
\includegraphics[width=0.9\textwidth]{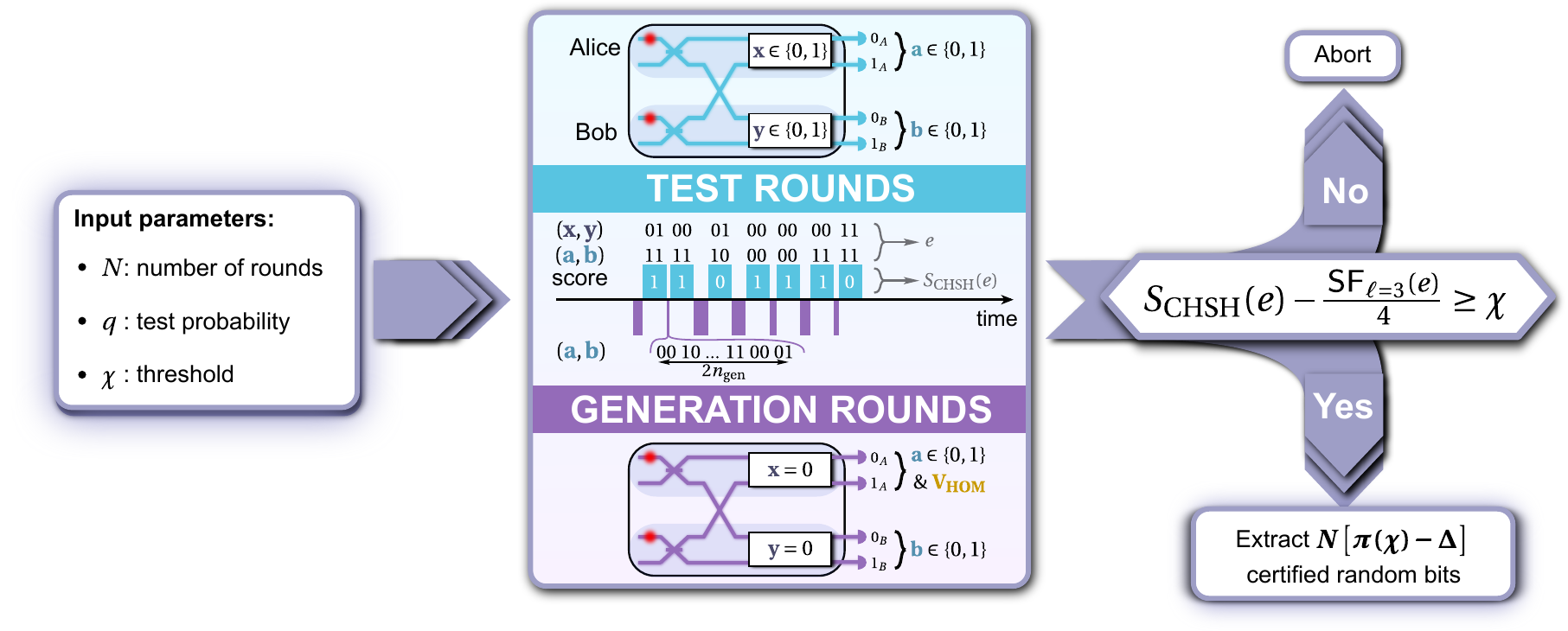}
\caption{
    \textbf{Certification protocol.}
    The protocol certifies privacy and unpredictability of the output sequence.
    The main loop consists in two steps: generation rounds (purple area) followed by a test round (blue area). The number of generation rounds $n_\text{gen}$ to acquire in each iteration of the main loop is determined by a geometric probability distribution of parameter $q$. Each generation round consists in measuring a coincidence $(a,b)$ in the generation context $(x=0,y=0)$. After $n_\text{gen}$ generation rounds, a measurement context $(x,y)$ is chosen randomly among (0,0), (0,1), (1,0) and (1,1), and a coincidence $(a,b)$ is measured in a test round. The test rounds are used to compute the CHSH game score $S_\text{CHSH}(e)=\frac{c}{Nq}$, where $e$ is the behaviour constructed from the test round measurements and $c$ is the sum of all individual test round scores. When the total number of rounds carried out reaches $N$, the protocol exits the main loop. If $S_\text{CHSH}(e)-\SF_{l=3} < \chi$ the protocol aborts, else we perform randomness extraction on the generation rounds to yield $N \left[ \pi(\chi)-\Delta \right]$ certified random bits. In parallel to the generation rounds, we monitor the photon indistinguishability via the HOM visibility $V_\text{HOM}.$  
   }
\label{fig:flowchart}
\end{figure*}

\section{Implementation of the protocol and experimental results}

The full protocol is described in Fig.~\ref{fig:flowchart}. Before each round of the experiment, Alice and Bob choose the measurement contexts $(x,y)$ via phases $(\psi_x^A,\phi_x^A)$ and $(\psi_y^B,\phi_y^B)$ indicated on Fig.\ \ref{subfig:chip_scheme}. The corresponding values of the phases are given in Methods, Table \ref{table:context_table}. Alice and Bob also agree that $(x=0, y=0)$ is the generation context, i.e.\ the measurement context in which the random number will be generated. We choose $\phi_0^A=-\pi/2$, so that $(x=0, y=0)$ allows us to acquire the HOM visibility in parallel to the generation rounds because Alice's interferometer acts as a symmetric beamsplitter in that configuration. For our experiment, $\psi_x^A=\psi_y^B=0$. To implement in practice on the chip a set of phases, we solve the phase-voltage matrix equation written in Appendix~\hyperref[app:crosstalk]{E} using standard optimization procedures. Alice (resp. Bob) maps her (his) result to `0' when a photon is detected in mode $0_A$ (resp.\ $0_B$) and to `1' when a photon is detected in mode $1_A$ (resp.\ $1_B$), according to the mode labelling of Fig.\ \ref{subfig:chip_scheme}.

The protocol alternates between generation rounds and test rounds. A round consists in the measurement of a coincidence between Alice and Bob, whose result is stored in the form 00, 01, 10 or 11, where the first bit describes Alice's result and the second one Bob's result.  For the generation rounds, Alice and Bob set their phase to the generation context, and for the test rounds, they choose randomly a measurement context among the four. The protocol outputs three binary sequences: generation round results, test round results and test round contexts. The first stores the coincidence results for the generation rounds and represents the sequence of random bits before randomness extraction. The test round results and contexts sequences are used to determine the experiment's behaviour $p$. From it, we can compute $S_\text{CHSH}(p)$ and 
$\SF(p)$.
We assume that the detected photons are representative of the whole optical setup behaviour, i.e.\ that the sampling is fair.

The predicted coincidence rate between Alice and Bob, which corresponds to the rounds processing rate is computed as follows:

\begin{align} 
&\SI{158e6}{} && \text{(pump laser pulse rate)} \nonumber\\
& \ \times (0.0266)^2 && \text{(overall photon transmission)} \nonumber\\
& \ \times 1/4 && \text{(passive demultiplexing)} \nonumber\\
& \ \times 1/2 && \text{(state post-selection)} \nonumber \\
& \ = \SI{14000(3000)}{\per\second}
\label{eq:coinc}
\end{align}

The measured coincidence rate is about $\SI{14200( 600)}{\per\second}$. Considering the thermalization waiting time of 250 ms after each measurement context switch, the expected protocol rounds processing rate from the measured coincidence rate is around $\SI{8300\pm 300}{\per\second}$, which is close to the measured rate of $\SI{7300}{\per\second}$. The disparity indicates an opportunity for improvement in the programming implementation of the randomness generation protocol.

The polarisation of photons reaching the polarisation-sensitive detectors exhibits temporal fluctuations, leading to asymmetric count rate variations on Alice's and Bob's pair of detectors. This causes a drift on the order of 0.25 mrad/h on the measured phases $\phi^A$ and $\phi^B$ of Alice's and Bob's interferometer.
To compensate this effect and additional phase errors, the voltages used to implement the corresponding phases for each measurement context with the heating resistors are calibrated at the beginning of the protocol and then after every 6 hours of operation (see Appendix~\hyperref[app:calibration]{F}). As a result, all phases stayed confined within an interval of 3 mrad around the target phases during an almost 100 hour-long run.

\begin{figure}
\includegraphics[width=\textwidth]{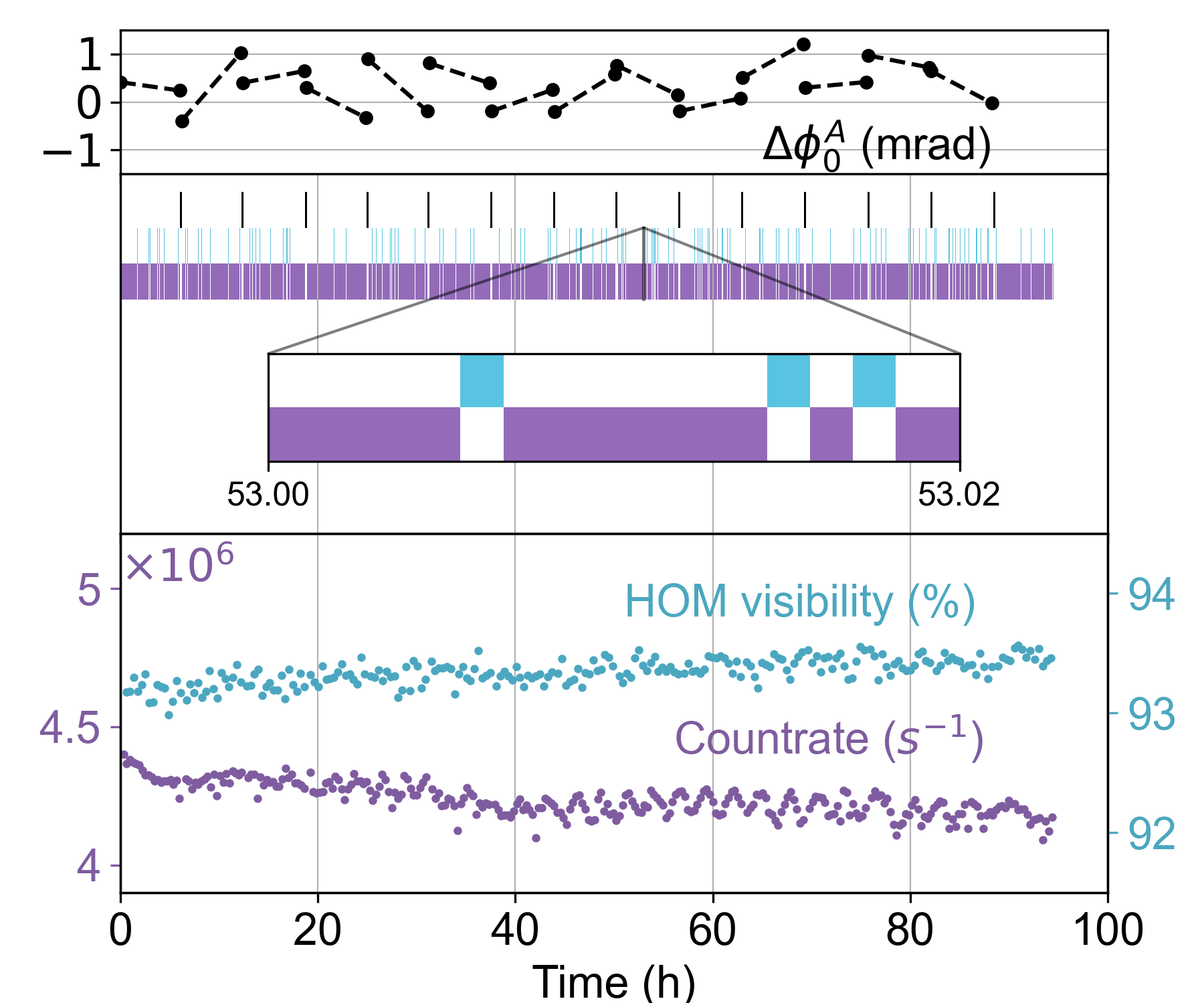}
\caption{
    \textbf{Stability, brightness and indistinguishability of photons emitted by the source during a 94.5-hour-long experiment.}
    Top plot: difference $\Delta \phi ^A_0 = \phi^A_0 - ( - \pi/2) $ between Alice's measured and target interferometer phase in context $(0,0)$. Phase is measured using the interferometer splitting before and after each context voltages calibration. Middle plot: typical sequence of calibration, test and generation sequences generated randomly using parameter values of the experiment. Bottom plot: total countrate and HOM visibility at output of the chip.}
\label{fig:run}
\end{figure}

Before data acquisition we measured a violation of the CHSH inequality $I_\text{CHSH}=2.68$, which we used to fix the optimal parameters for the protocol. The total number of rounds $N=2.4\times 10^9$ was determined by the desired duration of the acquisition together with the expected rate. We optimised the test probability $q$ (cf. Fig. \ref{fig:flowchart}) to maximise the min-entropy bound given by Eq.~\eqref{eq:BoundMain} and fixed $q=\SI{1.34e-4}{}$. As a result, the number of test rounds carried out to construct the behaviour of our device and estimate the CHSH violation is $3.2\times 10^5$. We obtained $I_{\text{CHSH}}=2.685$ on the test rounds (or $S_{\text{CHSH}}=0.8356$ in the CHSH game picture) and $\SF_3(e)= 0.005$. In Fig.~\ref{fig:run}, we present a few key figures recorded during the acquisition, showcasing the stability and precision of our setup.
According to Eq.~\eqref{eq:BoundMain}, these values certify that the $\delta$-smooth min-entropy $H^{\delta}_{\textrm{min}}(\mathbf{AB}|\mathbf{XY},E)$ of the obtained sequence of results $\mathbf{AB}$, conditioned on the sequence of input measurement choices $\mathbf{XY}$ and the quantum side information held by any potential eavesdropper $E$, is at least
\begin{equation}
 H^{\delta}_{\textrm{min}}(\mathbf{AB}|\mathbf{XY},E)  \geq \SI{7.21e6}{},
\end{equation}
where we chose as security parameter $\delta = 10^{-10}$. We can hence extract $\SI{7.21e6}{}$ random bits with a Toeplitz matrix hashing randomness extractor \cite{Krawczyk1995}. 

Note that this amount of certified randomness is compatible with randomness expansion, in the sense that if we were to use the interval algorithm to generate our strongly biased input bits from a small number of uniform bits~\cite{KY76Algorithms, HH1997Interval}, the input randomness required for our implementation would be: 
\begin{equation}
\textrm{rand}_{in}=N(h(q) + 2 q) = \SI{5,24e6}{},
\end{equation}
where $h$ is the binary Shannon entropy, which is smaller than the amount of randomness we generate at the output.

\section{Discussion and outlook}

This work focuses on certifying randomness with a high standard of security and a compact device, as is needed for real-world applications. Previous implementations of quantum random number generation on a chip focused on the device-dependent framework~\cite{BHQGbps2021} or did not provide a complete security proof~\cite{WPD2018Multidimensional}, while we report here the first on chip Bell-based randomness generation protocol with a complete security proof.
Improvements, both theoretical and experimental, would allow us to maintain this level of security at higher rates. 

On the theoretical side, the Miller--Shi protocol is valid for generic nonlocal and contextual games but it is not optimal. Recent results on how entropy accumulates \cite{DFR2020Entropy, DF2019Entropy} have been used to obtain optimal rates for randomness expansion with the CHSH game \cite{ARV2019Simple} and better rates for arbitrary nonlocal games \cite{BRC2020Framework}. The general framework of \cite{BRC2020Framework} in combination with new techniques for efficiently lower-bounding the conditional von Neumann entropy \cite{BFF2021Device} and with our randomness-versus-signalling tradeoff can lead to order-of-magnitude improvements in min-entropy rates. This will require extending the aforementioned results to contextuality scenarios with limited but nonzero signalling. We could also use a similar setup to certify randomness against classical side information only. In that case, biasing the input distribution is useful in two cases: when an output distribution for some choice of inputs contains more randomness than others, and when one aims to do  randomness expansion (that's the goal of spot-checking). When one wishes to simply generate private randomness, and when the randomness-bounding functions are identical for all input choices because of symmetries of the underlying behaviour, it is more favourable to generate randomness from all inputs. However, using a biased input distribution is still beneficial to avoid lengthy recalibration time between each round. By implementing a protocol against classical side information on a silicon nitride chip, where thermo-optic phase shifters react faster than in glass, and with an optimised choice of bias for the inputs, we could certify bits against classical side information at a rate of kbits/s after only a few minutes of acquisition.

On the hardware side, the efficiency of deterministic single photon sources does not have fundamental limits and can be increased while keeping single photon purity high. The main limitations are optical losses within the setup and the reconfiguration time of the photonic circuit. We estimate the experiment duration with short-term hardware improvements. With active photon demultiplexing, the photon coincidence rate can be doubled, and using a photonic chip featuring electro-optic \cite{Janner2009} or mechanical \cite{Quack2020} phase-shifters, the round processing rate equals the coincidence rate because it removes the need for thermalization waiting times. Commercially available single-photon detectors can reach 90 \% detection efficiency. For the single-photon source, these characteristics are achieved currently in the lab: 50 \% polarized first-lens brightness, $g^{(2)}(0) = 0.9 \%$ and $V_\text{HOM} = 94 \%$. As a consequence, the etalon filter is not needed. Factoring in all of these improvements, the total setup transmission increases from 2.66 \% to 10.5 \%, and the CHSH inequality value increases from 2.66 to 2.73. Repeating our randomness generation experiment with the same number of rounds to process, that is $N=\SI{2.4e9}{}$, it would take only 1.3 hours and generate $\SI{9e6}{}$ certified random bits, resulting in a kbits/s rate.

\bibliography{Bibliography}

\section*{Acknowledgements}

We thank Pascale Senellart for valuable discussions and feedback, Mathias Pont for experimental help and multi-photon model of the Perceval package used in our simulations, Damian Markham and Ivan Šupić for discussions on quantum correlations and signalling, and Swabian Instruments for helpful discussions.

This work has received funding by the European Union’s Horizon 2020 Research and Innovation Programme QUDOT-TECH under the Marie Skłodowska Curie Grant Agreement No. 861097 and from BPI France Concours Innovation PIA3 projects DOS0148634/00 and DOS0148633/00 – Reconfigurable Optical Quantum Computing.

\section*{Author contributions}
A.F.: experimental investigation, data analysis, methodology, visualisation and writing.
B.B.: conceptualisation, formal analysis, methodology, visualisation and writing.
A.M.: experimental investigation.
P-E.E.: conceptualisation and  formal analysis.
K.S.: randomness extraction. 
N.Margaria: QD device processing. 
M.M.: QD sample growth. 
A.L.: QD sample growth. 
I.S.: QD device processing. 
P.S.: QD device characterization. 
T.H.A.: QD device processing. 
S.B.: QD device processing.
N.S.: supervision.
N.Maring: experimental investigation and supervision.
N.B.: 
experimental investigation,
data analysis, methodology, formal analysis, visualisation and writing, supervision. 
S.M.: conceptualisation, formal analysis, visualisation and writing, supervision.

\section*{Competing interests}
A patent has been filed listing B.B., P.-E.E. and S.M. as inventors.

\section*{Data availability}

The data that support the findings of this study are available from the 
corresponding authors upon request. 

\section*{Code availability}

Code for the randomness extraction step is available from the corresponding authors upon request.

\section*{Methods}

\subsection*{Single-photon generation}

We illustrate the optical setup in Fig.\ \ref{fig:setup}. Single photons at $925.16$ nm are generated by a Quandela single photon source relying on an InAs/GaAs quantum dot embedded in a cavity \cite{Somaschi2016}. A voltage of the order of -1.5 V is applied on the dot, such that the emission line is in resonance with the cavity and to reduce charge noise. The source is pumped using the longitudinal-acoustic phonon assisted excitation scheme \cite{Barth2016, Cosacchi2019, Gustin2020, Thomas2021} at around 924.24 nm and with a pump spectral FWHM $\Delta\omega\approx\SI{0.1}{\nm}$, corresponding to a pulse duration of the order of $\SI{12}{ps}$. The pump is a mode-locked femtosecond laser with a repetition rate of 79.08 MHz, corresponding to a duration $\tau_{\text{rep}}\approx\SI{12,6}{ns}$ between two consecutive pulses.

The laser pulses are subsequently temporally shaped using a filtering setup based on a 4f line principle, which includes a grating splitting incoming wavelengths into different directions, a slit for wavelength selection, and another grating recombining the light into a single Gaussian beam, ensuring optimal pumping of the source. To increase the random number generation rate of the experiment, the pulse rate is doubled using a fibered Mach-Zehnder interferometer with an approximately $\tau_{\text{rep}}/2$ delay line on one arm. 

The excitation pulses are then sent to the photon source. Emitted single photons and reflected pump light are sent to a filtering stage consisting in three bandpass filters ($10^{-3}$ transmission at 924 nm, 805 pm FWHM) and a Fabry-Perot etalon (FSR 204 pm and finesse 14 at 925 nm, $\SI{59(1)}{\%}$ (error bar is one standard deviation) single-photon transmission).

The first lens brightness of our single-photon source (number of photons collected per excitation pulse at the level of the source \cite{Somaschi2016}) amounts to $\SI{39(3)}{\%}$ and the polarized fibered brightness (number of photons collected per excitation pulse after the filtering stage including the polarizer, see Fig.\ \ref{fig:setup}) is $\SI{8.3(8)}{\%}$, corresponding to a polarized photon output rate of $\SI{13.0(1)e6}{\s^{-1}}$. 

The purity and indistinguishability of the generated photons are increased by inserting a polarizer and a Fabry-Perot etalon. The purity with the etalon (see Sup III.A) is $g^{(2)}(0)\approx \SI{2.31(3)}{\%}$. The Hong-Ou-Mandel (HOM) visibility is $\SI{93.09(4)}{\%}$. We can deduce from these measurements the photon mean wave-packet overlap $M_s=\SI{97.65(6)}{\%}$ \cite{Ollivier2021}.

\subsection*{Single-photon manipulation}

A passive demultiplexing stage ($\SI{20}{\%}$ insertion loss) converts the photon stream into pairs of photons arriving simultaneously at the QRNG chip input. The demultiplexer consists in a polarizer set in the diagonal position, such that the subsequent polarizing beamsplitter on Fig. \ref{fig:setup} acts like a symmetric beamsplitter. The waveplates preceding the polarizer are set to maximise the number of transmitted photons through the polarizer. One of the outputs of the beamsplitter leads to a fibered delay loop. A pair of photons is successfully demultiplexed when the first photon of the pair takes the long path via the fibered loop and the second photon the short path. Hence, only 1/4 of photon pairs are successfully demultiplexed.

The silica glass QRNG chip features laser-written waveguides and four configurable thermo-optic phase shifters (see Sup. III.A for details and operation). The optical transmission of the chip is $\SI{58(1)}{\%}$ (averaged over the two inputs used). Its output is sent to a superconducting nanowire single photon detector (SNSPD, $\SI{70}{\%}$ detection efficiency). Photon times of arrival are processed by a time tagger module. We measure an overall setup transmission, ie;\ the probability of a photon being detected after an excitation pulse arriving on the source,  of $\SI{2.7}{\%}$ by using the photon countrate on the detectors.

\newpage

\subsection*{Selection of the measurement bases}

We compute the on-chip phases that should be applied in order to maximally violate the CHSH inequality and explain how to calibrate the voltages accordingly. Alice and Bob each control an interferometer (see Fig.\ \ref{subfig:chip_scheme}) consisting in a phase $\psi$, followed by a symmetric beamsplitter, a phase $\phi$ and a second symmetric beamsplitter. Their interferometer is described by the unitary matrix
\begin{multline}
\widehat{U}(\psi, \phi)= 
\frac{1}{\sqrt{2}}
\left[\begin{array}{ll}
1 &
i \\
i &
1
\end{array}\right]
\left[\begin{array}{ll}
e^{i\phi} &
0 \\
0 &
1
\end{array}\right]
\\ \\
\times
\frac{1}{\sqrt{2}}
\left[\begin{array}{ll}
1 &
i \\
i &
1
\end{array}\right]
\left[\begin{array}{ll}
e^{i\psi} &
0 \\
0 &
1
\end{array}\right]
\\ \\
= i e^{i\frac{\phi}{2}}
\left[\begin{array}{ll}
\sin\left(\frac{\phi}{2}\right)e^{i\psi} &
\cos\left(\frac{\phi}{2}\right) \\
\cos\left(\frac{\phi}{2}\right)e^{i\psi} &
-\sin\left(\frac{\phi}{2}\right)
\end{array}\right].
\label{eq:unitary}
\end{multline}
A possible choice of measurement bases that maximally violates the CHSH inequality is  described in Table~\ref{table:context_table}. With this choice, when Alice's measurement basis is $x=0$, we can measure the HOM visibility of the photons by recording the photon coincidences on her outputs. Indeed, for that basis, her interferometer behaves like a symmetric beamsplitter.

\begin{table}
\begin{tabular}{|l|l|l|l|l|l|}
\hline
Measurement context $(x,y)$ & $\psi^A$ & $\phi^A$ & $\psi^B$ & $\phi^B$ \\ \hline
$(0,0)$                      & 0            & $-\pi/2$      & 0            & $-\pi/4$    \\ \hline
$(0,1)$                      & 0            & $-\pi/2$      & 0            & $\pi/4$    \\ \hline
$(1,0)$                      & 0            & $0$        & 0            & $-\pi/4$    \\ \hline
$(1,1)$                      & 0            & $0$        & 0            & $\pi/4$    \\ \hline
\end{tabular}
\caption{Phase shifts for maximal CHSH inequality violation. Alice's measurement bases are labeled $x=0$ and $x=1$, Bob's bases are labeled $y=0$ and $y=1$. A measurement context is a pair $(x,y)$. From the measurement bases, we compute the phases, hence the voltages, that should be applied to Alice's and Bob's phase shifters.}
\label{table:context_table}
\end{table}

\subsection*{CHSH inequality value computation}

A behaviour for a CHSH Bell test can be described by a table containing four columns, one for each measurement context $(x,y)$, and the observed probability of each coincidence result $(a,b)$ on the corresponding row. For $V_\text{HOM}=\SI{93}{\%}$, which characterizes our single photon source, the expected behaviour is (see Sup.\ III.C):
\begin{table}[H]
	\centering
	\begin{tabular}{cc||cccc}
		 a & b & $00$ & $01$ & $10$ & $11$ \\ \hline
		 $0$ & $0$ & 0.414    & 0.086    & 0.073    & 0.073  \\
		 $0$ & $1$ & 0.086    & 0.414    & 0.427    & 0.427 \\
		 $1$ & $0$ & 0.086    & 0.414    & 0.427    & 0.427 \\
		 $1$ & $1$ & 0.414    & 0.086    & 0.073    & 0.073
	\end{tabular}
\caption*{}
\label{tab:behaviour_1}
\end{table}

From the $\SI{3.2e5}{}$ test rounds of our 94.5-hour-long randomness generation experiment, we construct the following observed behaviour:
\begin{table}[H]
	\centering
	\begin{tabular}{cc||cccc}
		 a & b & $00$ & $01$ & $10$ & $11$ \\ \hline
		 $0$ & $0$ & 0.424    & 0.090    & 0.085    & 0.077  \\
		 $0$ & $1$ & 0.087    & 0.416    & 0.418    & 0.429 \\
		 $1$ & $0$ & 0.084    & 0.410    & 0.418    & 0.423 \\
		 $1$ & $1$ & 0.405    & 0.084    & 0.079    & 0.071
	\end{tabular}
\caption*{}
\label{tab:behaviour_2}
\end{table}

The uncertainty due to finite statistics is of the order of 0.001 for each cell. Discrepancies between the two tables can be attributed to multi-photon emissions combined with optical losses, the reflectivities of the on-chip directional couplers estimated at 51 \%, errors on phase shifter phases and dark counts.

For the theoretical behaviour, we get as expected $S_\text{CHSH}\approx 0.84$ ($I_{\text{CHSH}}\approx 2.73$) and from our experimentally observed behaviour $S_\text{CHSH}\approx 0.835$ ($I_{\text{CHSH}}\approx 2.685$).

\section*{Appendix A: Contextual fraction and signalling fraction}
\label{app:fractions}

For a contextual $n$-partite game, we denote by $p$ an associated empirical behaviour, i.e. a set of probability distributions on the outputs conditioned on the inputs. We call a given choice of inputs a context, denoted by $C$. In an ideal contextual game, the measurements in a context are compatible, so the marginal distributions computed from two different distributions on the intersection of their contexts are equal, i.e. they obey the generalised no-signalling condition. In a real implementation however, the measurements of a context might not be perfectly compatible because of physical crosstalk. We nonetheless define the same contexts as in the ideal description of the game, which means that we can observe behaviours that do no satisfy the no-signalling conditions. We call $\mathcal{NS}$ the space of behaviours that does satisfy no-signalling and $\mathcal{E}$ the bigger space of all behaviours, i.e. the space of sets of real numbers for each contexts that satisfy the usual normalisation and positiveness conditions. 

A no-signalling behaviour is non-contextual if the distribution for each context $\{p_C\}$ can be obtained as the marginal of a single distribution on global assignments~\cite{AB2011Sheaf}. The contextual fraction $\CF$ quantifies how contextual a behaviour is and is the solution of a linear program (LP)~\cite{ABM2017Contextual}. It generalises the nonlocal fraction~\cite{EPR1992Quantum}. For all behaviours $p$ it holds that $\CF(p) \in [0,1]$, that $p$ is non-contextual if and only if $\CF(p) =0$, and that $p$ is maximally (or strongly) contextual if and only if $\CF(p)=1$. 

Similarly, the signalling fraction $\SF$ that we introduce in this work (simultaneously introduced in~\cite{EBMM2022Corrected}) quantifies how signalling a behaviour is. For an empirical behaviour $p$, we consider affine decompositions of the form $p = s \cdot p' + (1-s)\cdot p''$, for which $p'$ is a non-signalling behaviour. If $s^*$ is the maximum weight that can be assigned to a non-signalling $p'$ in such a decomposition then we define the non-signalling fraction as $\mathsf{NSF} (p):=s^*$. Then we define the signalling fraction $\SF(p):=(1-s*)$, as this is the irreducibly signalling `fraction' of $p$. In this way it quantifies how far from $\NS$ a behaviour is. We can thus decompose $p$ as:

\begin{equation} 
p = \mathsf{NSF}(p) \, p^{NS} + \SF(p) \, p^{SS} ,
\label{def:SF}
\end{equation} 
where $p^{NS}$ is a no-signalling behaviour and $p^{SS}$ is a strongly signalling behaviour, i.e. $\SF(p^{SS})=1$. For all behaviours $p$, it holds that $\SF(p) \in [0,1]$, and that $p$ is non-signalling if and only if $\SF(p)=0$. The signalling and non-signalling fractions $\SF(p)$ can be computed by a linear program, similarly to $\CF(p)$.

One can study the potential decompositions of behaviours into hidden-variable models (HVMs). An HVM for a measurement scenario is defined by a set of hidden variables $\Lambda$, a distribution $h(\lambda)$ over $\Lambda$ and, for each $\lambda$, a behaviour $h^{\lambda}$ on the same measurement scenario. A behaviour $e$ is said to be realised by an HVM $\{\Lambda, h(\lambda), h^{\lambda} \} $ if it arises as the weighted average
\begin{equation}
p=\sum_{\lambda \in \Lambda} h(\lambda) \cdot h^{\lambda}.
\label{eq:HVM}
\end{equation}
In this way HVMs provide a framework to reason about deeper or more fine-grained descriptions of hypothetical underlying processes that may be giving rise to an observed empirical behaviour, and that in principle an adversary may seek to exploit. Note that for finite measurement scenarios signalling, non-signalling, and non-contextual behaviours each arise as convex combinations of a finite number of vertex behaviours. These provide canonical hidden variable spaces, and when we wish to consider these properties it thus suffices to consider averages over finite hidden variable spaces (for extensions to the continuum see \cite{barbosa2022continuous}).

In the device-independent approach, it is thus crucial to examine all acceptable HVMs, i.e.\ all HVMs that are compatible with the underlying theory and the description of the experimental setup we implement, to bound the power of the eavesdropper. In particular, in the case of certified randomness, an HVM decomposition describes the eavesdropper's ability to predict the outputs of an experiment. In the case of a multi-partite game implemented in a space-like separated manner, all the behaviours of the HVMs must for instance be no-signalling (the term parameter-independent' is sometimes used instead, when talking about hidden-variable properties~\cite{EBMM2022Corrected}). Under these conditions, non-contextuality is equivalent to perfect predictability, because any non-contextual behaviour can be decomposed into no-signalling deterministic behaviours~\cite{Fine1982Hidden}.

The constraint that the elements of the HVM are no-signalling can be replaced by an upper-bound on their signalling fraction, if one has reasons to believe that the physical setup underlying the examined behaviour allows some flow of information between the physical components implementing the measurements of each context, but only in a limited amount. Indeed, crosstalk between the components, which happens at the physical level, is related to signalling at the level of the behaviours. We assume here that the amount of signalling allowed at the hidden-variable level $\sigma$ is not greater than the signalling $\SF_{\ell}$ observed in the estimated behaviour, which we define in the following way: we look for decompositions of the form $p = s \cdot p' + (1-s)\cdot p''$, for which $p'$ is contained in $\mathsf{NPA}_{\ell} \subseteq \mathcal{NS}$, the $\ell$-th level of the NPA hierarchy~\cite{NPA2007Bounding, NPA2008Convergent}, which approximates the set of behaviours achievable by performing quantum measurements on quantum states. The set $\mathsf{NPA}_{\ell}$ is a strict subset of the no-signalling space. We project into $\mathsf{NPA}_{\ell}$ rather than into $\mathcal{NS}$ because a no-signalling but supra-quantum behaviour can also be achieved by a signalling and quantum behaviour.
Similarly to Eq.\ \ref{def:SF}, we can thus decompose $p$ as
\begin{equation}
p = (1-\SF_{\ell}(p)) \, p^{\mathsf{NPA}_{\ell}} + \SF_{\ell}(p) \, p'' \, .
\label{def:sigma}
\end{equation} 
$\SF_{\ell}$ satisfies the following properties: $\mathsf{SF}_{\ell}$ increases with $\ell$, by definition $\sigma \geq \mathsf{SF}$ and $\sigma \geq \mathsf{SF}_{\ell}$ for all $\ell$ if we assume that quantum mechanics is valid.

Note that the security of the protocol described in the next section relies only on Propositions \ref{prop:SclBound} and \ref{prop:SclSxi}, which do not rely on any assumption on the signalling $\sigma$. It can thus be straightforwardly adapted to another choice of $\sigma$, based on a semi-device-dependent characterisation of the devices or related to different cryptographic assumptions, by replacing $\SF_{\ell}$ in Step 6 by the appropriate choice. 

Another approach to derive a lower bound on the physical crosstalk from the observed behaviour was proposed by Silman et al.~\cite{SPM2013Device}. It puts a constraint at the level of the measurements rather than on the behaviours (see Eq.~(6) in ~\cite{SPM2013Device}). This problem being computationally intractable, the authors also propose to use the NPA hierarchy to obtain a lower bound. The crosstalk measure $\chi$ is smaller than $\SF_{\ell}$ because the optimisation space for $\SF_{\ell}$  is a subspace of the optimisation space for the former. Indeed, the second constraint of Eq.~(6) in~\cite{SPM2013Device} is equivalent to 
\begin{equation}
- \chi \leq p(ab|xy) - p^{\mathsf{NPA}_{\ell}}(ab|xy) \leq \chi,
\label{eq:chiSilman}
\end{equation}
with $p^{\mathsf{NPA}_{\ell}}$ in $\mathsf{NPA}_{\ell}$, where, without loss of generality, we took the same state for both behaviours, while Eq.~\eqref{def:sigma} is equivalent to 
\begin{equation}
    p-p^{\mathsf{NPA}_{\ell}}=\SF_{\ell}(p)(p''-p^{\mathsf{NPA}_{\ell}}),
\end{equation}
which implies Eq.~\ref{eq:chiSilman} for $\chi=\SF_{\ell}(p)$, as $-1\leq p''-p^{\mathsf{NPA}_{\ell}} \leq 1$.

Both metrics provide valid lower bounds on the crosstalk, as long as they are used coherently (i.e.\ one should compute the maximal score in the presence of crosstalk using $\chi$ if one uses $\chi$ to estimate crosstalk). The $\chi$ metrics are based on a statistical distance, while the $\SF$ metrics look for a convex decomposition. The advantages of our chosen approach are that it integrates well with convex optimisation techniques used to compute maximal scores, Bell inequality violations and guessing probabilities, and it allows us to leverage all duality properties that relate these quantities to Bell inequalities~\cite{BSS2014More, NPS2014Using, ABM2017Contextual}.

\section*{Appendix B: Relation between score with bounded contextuality and score with bounded signalling}
\label{app:scores} 

We reformulate Proposition~\ref{prop:SclSxi} in the more general case of a $n$-partite contextual game.

\begin{prop}
For a contextual $n$-partite game with binary input choices for all players, let $C$ be a distinguished context. Then 
\begin{equation}
S^{\sigma}_{C}  \leq S_{\mathrm{cl}}^{\xi = \sigma}. 
\end{equation}
\label{prop:scores}
\end{prop}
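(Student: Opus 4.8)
The plan is to relate a behaviour $e$ with small signalling fraction that is deterministic on the distinguished context $C$ to a behaviour $e'$ with small contextual fraction, in such a way that the game scores are linked. The natural route is: start from an optimal (or near-optimal) $e$ for the left-hand side, use the defining decomposition $e = \NSF(e)\,e^{NS} + \SF(e)\,e^{SS}$ from Eq.~\eqref{def:SF}, and argue that the no-signalling part $e^{NS}$ inherits (up to the weight $\SF(e) < \sigma$) determinism on $C$, so that $e^{NS}$ is a no-signalling behaviour that is \emph{close} to being deterministic on $C$. The key combinatorial input is the hypothesis that the game has binary inputs for all $n$ players: a no-signalling behaviour that is exactly deterministic on one context $C$ cannot, on the remaining contexts, do better than the classical bound, because fixing the outputs on $C$ via no-signalling marginals constrains the single-player marginals, and with only two inputs per player these marginals already pin down enough of the other contexts that no contextual advantage survives. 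I would formalize this as: $S_C = S_{\mathrm{cl}}$ when inputs are binary (this is essentially the observation behind Miller--Shi's use of $S_C$), or more directly build the contextual-fraction bound on the perturbed behaviour.

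Concretely, the main steps I would carry out are: (1) Fix $e$ with $\SF(e) < \sigma$ and $e_C(s) = 1$ for some outcome assignment $s$, achieving score within $\varepsilon$ of $S^\sigma_C$. (2) Write $e = (1-\SF(e))\,e^{NS} + \SF(e)\,e^{SS}$. Since $e_C$ is a point distribution and $e^{SS}_C$ is some distribution, solving for $e^{NS}_C$ shows $e^{NS}_C$ puts weight at least $1 - \SF(e)/(1-\SF(e))$-ish on $s$; more cleanly, $e^{NS}$ is within total-variation $O(\SF(e))$ of a no-signalling behaviour $\tilde e$ that is exactly deterministic on $C$ (one can take $\tilde e$ by replacing the $C$-marginal and propagating via no-signalling, or simply bound directly). (3) The score is affine in $e$, so $p\cdot V\cdot e \le (1-\SF(e))\,p\cdot V\cdot e^{NS} + \SF(e)\cdot 1$. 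Then $p\cdot V\cdot e^{NS} \le S_{\mathrm{cl}} + O(\SF(e))$ because $e^{NS}$ is no-signalling and near-deterministic on $C$ — here is where binary inputs and $S_C = S_{\mathrm{cl}}$ enter. (4) Conversely, a behaviour at contextual fraction $\xi$ decomposes as $\xi\, e^{SC} + (1-\xi)\, e^{NC}$ with $e^{NC}$ non-contextual, so $S_{\mathrm{cl}}^\xi \ge (1-\xi) S_{\mathrm{cl}} + \xi\cdot(\text{max over strongly contextual}) $, and one checks the bookkeeping so that the $\sigma$-perturbation of $S_{\mathrm{cl}}$ on the signalling side is dominated by the $\xi$-perturbation on the contextual side when $\xi = \sigma$. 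Sending $\varepsilon \to 0$ gives $S^\sigma_C \le S_{\mathrm{cl}}^{\xi = \sigma}$.

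I expect the main obstacle to be step (3): making precise that a no-signalling behaviour which is \emph{deterministic} on the distinguished context $C$ achieves at most the classical score $S_{\mathrm{cl}}$, and that this is robust under an $O(\sigma)$ perturbation. The binary-input hypothesis is doing real work here and I would want to argue it carefully — plausibly by showing that fixing $e_C$ deterministically, together with no-signalling, forces each player's single-input marginal (for the input appearing in $C$) to be deterministic, and then that a behaviour deterministic on "half" the inputs of every player is a convex mixture of behaviours compatible with a non-contextual assignment on those inputs, so that it cannot beat $S_{\mathrm{cl}}$; equivalently, one shows $S_C = S_{\mathrm{cl}}$ directly. A secondary subtlety is the strict-versus-non-strict inequalities ($\SF(e) < \sigma$, $\CF(e) < \xi$) in the definitions, which I would handle by the standard supremum/limit argument rather than worrying about attainment. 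The convexity/affinity of $p\cdot V\cdot(\cdot)$ and of the $\SF$/$\CF$ linear programs is what makes all the decomposition steps legitimate, and I would lean on the fact (quoted in the excerpt) that these fractions are computed by linear programs over finitely many vertex behaviours.
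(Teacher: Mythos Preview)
You have correctly identified the key combinatorial input --- that for binary-input games a no-signalling behaviour deterministic on one context is automatically non-contextual (Miller--Shi, Appendix~D) --- and your overall architecture is sound. But you overcomplicate step~(2) and thereby send yourself down an unnecessary perturbative detour. If $e_C(s)=1$ and $e = (1-\tau)\,e^{NS} + \tau\,e^{SS}$ is a convex combination of genuine behaviours, then $1 = (1-\tau)\,e^{NS}_C(s) + \tau\,e^{SS}_C(s)$ with both terms $\le 1$ forces $e^{NS}_C(s) = e^{SS}_C(s) = 1$ \emph{exactly}. There is no ``$1 - \SF(e)/(1-\SF(e))$-ish'' approximation, and the robustness worry you flag as the main obstacle in step~(3) simply evaporates.

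Once you have exact determinism of $e^{NS}$ on $C$, the paper's argument is shorter than yours: the Miller--Shi lemma gives $\CF(e^{NS})=0$, so the very decomposition $e = (1-\tau)\,e^{NS} + \tau\,e^{SS}$ is a feasible point in the LP defining $\CF(e)$, whence $\CF(e) \le \tau < \sigma$. This makes $e$ itself feasible for the optimisation defining $S_{\mathrm{cl}}^{\xi=\sigma}$, and the inequality follows immediately --- no score arithmetic on the pieces, no step~(4) bookkeeping. Your route via bounding $p\cdot V\cdot e^{NS} \le S_{\mathrm{cl}}$ and then reconstructing $S_{\mathrm{cl}}^{\xi}$ from below would also close (since $S_{\mathrm{cl}}^{\xi}$ is a maximum over all of $\mathcal{E}$, a mixture of a classical-optimal behaviour with an all-winning signalling one witnesses the needed lower bound), but it is a detour around what is really a one-line feasibility observation.
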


\begin{proof}
Let $p^*$ be a solution for the optimisation problem defining $S^{\sigma}_{C}$. We can decompose $p^*$ as 
\begin{equation} 
p^* = (1- \tau) \cdot p' + \tau \cdot p''
\label{eq:decompNS}
\end{equation}
with $\tau \in [0,\sigma]$ and $p'$ a no-signalling behaviour. Moreover, by definition, we must have $p'_C=p''_C=1$. In the case of binary inputs, using the same construction as in Appendix D of~\cite{MS2017Universal}, we can find a local HVM for any no-signalling behaviour with binary inputs that is deterministic on a context, which implies that $\CF(p')=0$. Eq.~\eqref{eq:decompNS} is then a feasible point for the LP defining the contextual fraction of $p^*$, which in turn implies
\begin{equation}
    \CF(p^*) \leq \tau \leq \sigma,
\end{equation}
and $p^*$ is thus a feasible point for the optimisation problem corresponding to $S_{\mathrm{cl}}^{\xi = \sigma}$.
\end{proof}

Combining Propositions \ref{prop:SclBound} and \ref{prop:scores}, we see that, for randomness to be certified, the observed score $c/qN$ (see Fig. \ref{fig:protocol}) must satisfy $\frac{c}{qN} > \frac{\sigma(2^n-k)+k}{2^n}$, which in turn implies $\sigma < \frac{2^n (c/qN)-k}{2^n-k}$. That gives the upper bound on the amount of crosstalk that our analysis can tolerate while  certifying randomness.

\section*{Appendix C: Protocols for randomness generation and expansion against classical and quantum side information}
\label{app:protocol} 

\begin{figure}
\setlength{\fboxrule}{1pt}
\fbox{\parbox{\linewidth}{
\raggedright
\textit{Arguments:}
\begin{enumerate}
\item[$G$]: An $k$-consistent $n$-party contextual game with binary inputs for each player and a distinguished context $C$
\item[$N$]: The output length (a positive integer)
\item[$q$]: The test probability (a real number in $[0, 1]$)
\item[$\chi$]: The score threshold (a real number in $[ 0, 1]$)
\item[$\ell$]: The level of the NPA hierarchy (a positive integer)
\end{enumerate} 
\textit{Variables:}
\begin{enumerate}
\item[$c$]: The number of wins in test rounds (a positive integer that we set to 0)
\item[$\hat{p}$]: the estimated behaviour (a table indexed by the input and output choices that we set to 0)
\end{enumerate}
\textbf{Protocol:}
\begin{enumerate}
\item Choose a bit $t \in \{ 0, 1 \}$ according
to Bernoulli distribution $(1-q,q )$.  
\item If $t = 1$ (``test round''), play
$G$, record input and output in $\hat{p}$, add score to $c$.
\item If $t = 0$ (``generation round''), input $C$ and record output.
\item Steps 2--4 are repeated $(N-1)$ more times.
\item Normalise $\hat{p}$ and compute $\mathsf{SF}_{\ell}(\hat{p})$.
\item If $c/ (q N)  - \mathsf{SF}_{\ell}(\hat{p}) (2^n-k)/2^n   <  \chi $, 
then the protocol aborts. Otherwise, it succeeds.
\end{enumerate}
}}
\caption{\textbf{Protocol for randomness generation protocol with nonzero crosstalk}. The protocol is based on an $n$-player $k$-consistent nonlocal game with binary inputs, and it is secure provided that the signalling at the HV level $\sigma$ is not greater than $\mathsf{SF}_{\ell}$. Compared to previous protocols that ignored the effect of crosstalk, the observed average score $c/qN$ has to be greater than a fixed threshold plus the correction due to signalling for the protocol to succeed.}
\label{fig:protocol}
\end{figure}

To certify randomness against quantum side information, we use the protocol described in Fig.~\ref{fig:protocol}. We call $\mathbf{C}$ and $\mathbf{S}$ the sequence of context choices and outputs produced when the protocol is implemented, and $E$ the side information accessible to an eavesdropper. Then, when the protocol succeeds, the $\delta$-smooth min-entropy of the outputs $\mathbf{S}$ is at least:
\begin{equation}
H^{\delta}_{\textrm{min}}(\mathbf{S}|\mathbf{C},E)  \geq N (\pi(\chi)-\Delta),
\label{eq:Bound}
\end{equation}
with
\begin{equation}
\label{eq:RateCurve}
\pi(\chi) =2\frac{\log(e)(\chi-S_{\mathrm{cl}})^2}{nd-1}
\end{equation}
and
\begin{align}
\label{eq:Delta}
\Delta & = \frac{\log(2/\delta^2)}{N\varepsilon}+2ndq + \frac{\varepsilon}{q} \frac{8\log(e)(\chi-S_{\mathrm{cl}})^2}{(nd-1)^2} +  \\
& \quad + \left(\frac{\varepsilon}{q}\right)^2 \frac{32 \log(e)(\chi-S_{\mathrm{cl}})^3}{3(nd-1)^3}  \cdot 2^{\frac{\varepsilon}{q} \frac{4\log(e)(\chi-S_{\mathrm{cl}})}{nd-1}}  \nonumber
\end{align}
for any $\varepsilon \in ]0,1]$, where $\log(e)$ is the base-2 logarithm of the exponential and $d$ is the total number of outputs.

The bound~\eqref{eq:Bound} is derived from the security proof introduced in~\cite{MS2017Universal}. It is valid because the compatibility assumption in~\cite{MS2017Universal}, which translates into the no-signalling assumption at the behaviour's level, is required only to bound the distinguished score for the game. We derived a new relaxed distinguished score with signalling, $S_{C}^{\sigma}$, and can then use the tools of~\cite{MS2017Universal} to lower-bound the min-entropy with that new score. This score is, according to Propositions \ref{prop:SclBound} and \ref{prop:scores}, the classical score in the absence of signalling $S_{\mathrm{cl}}$ increased by $\sigma (2^n-k)/2^n$. An alternative way to present our protocol and the associated bound would be to test instead $c/ (q N)<\chi$ in Step. 6 and to replace $S_{\mathrm{cl}}$ by $S_{C}^{\sigma}$ in Eq.~\eqref{eq:RateCurve} and~\eqref{eq:Delta}. The effect of signalling would then be visible in the expression of the bound rather than, as it is now, in the description of the protocol itself through the requirement on $\chi$. These two approaches are equivalent because what matters in the min-entropy bound if the difference $\chi-S_{\mathrm{cl}}$.

The expressions for $\pi$ and $\Delta$ given by Eqs.~\eqref{eq:RateCurve} and~\eqref{eq:Delta} are obtained in the following way. The rate curve $\pi(\chi)$ is described by Theorem 5.8 of~\cite{MS2017Universal}, where, for the CHSH game, $r=4$ (the size of the output alphabet, $n\cdot d$ in our notation) and $W_{G,\bar{a}}=3/4$ (the score with distinguished input, which, for binary inputs, is equal to the classical score, $S_{\mathrm{cl}}$ in our notation, as proved in~\cite{MS2017Universal}, Appendix D). The term $\log(2/\delta^2)/N\epsilon$ in Eq.~\eqref{eq:Delta} comes Theorem 3.2 of~\cite{MS2017Universal}, rewriting it as $1+2\log(1/\delta)=\log(2/\delta^2)$. The other terms are expressed in big O notation in~\cite{MS2017Universal}, where the authors were interested in the asymptotic case, and we replace them by actual bounds on finite number of rounds using a derivation similar to~\cite{UZZ+2020Randomness}, Appendix G. More precisely, in~\cite{MS2017Universal}, the term $O(q)$ in Proposition 6.8 comes from an induction on Proposition B.2 and B.3 together with the fact that $(1-x)^{\alpha}\geq 1-\alpha x$, to bound the term $ \sum_x \left< \rho_{\overline{a}}^x
\right>_{1+\epsilon} / \left< \rho \right>_{1+\epsilon}$ in Eq. (6.24), and can thus be replaced by $2ndq$ in our Eq.~\eqref{eq:Delta}. The term $O(\epsilon/q)$ comes from the Taylor expansion at the order 3 of $x \mapsto 2^x$ around 0 in the proof of Proposition 6.3., and can thus be replaced by the last two terms of our Eq.~\eqref{eq:Delta}. The bound given by our Eq.~\eqref{eq:BoundMain} in the main text was optimised on $\varepsilon$, which led to taking $\varepsilon=\SI{6e-5}{}$.

For the min-entropy bound to hold, the following assumptions have to be satisfied:
\begin{enumerate}
\item The user implements the protocol in a secure lab from which information leakage can be prevented.
\item This lab can be partitioned in two sites, corresponding to Alice and Bob, and the information transfer between these two sites can be upper bounded
\item Quantum mechanics is correct.
\item The user has access to a trusted classical computer
\item The user has access to a source of private random numbers or of public random numbers that are independent of the state of the devices used to implement the protocol.
\end{enumerate}
The first assumption has to be satisfied for any cryptographic applications. The second assumption enables us to address the locality loophole by introducing a measure of signalling, and the third assumption allows us to quantify it via the distance to the set of quantum correlations. The fourth assumption is required for the processing of the data output by the protocol. The last assumption is needed to achieve random expansion (first case) or private randomness generation (second case). No additional assumption, in particular on the inner working of the device, is required.

In order to derive a bound on the min-entropy against classical side information, one can use the protocols proposed in~\cite{PM2013Security, FGS2013Security, NBSP2018Device}. To adapt them to nonzero signalling and our measure of crosstalk, the adequate modified version of the guessing probability problem~\cite{AMP2012Randomness, BSS2014More, NPS2014Using} with a fixed Bell inequality $\beta$ is: 
\begin{equation}
    \begin{split}
        G_{C} (I,\sigma, \ell) = & \max_p \ \max_{s} e_C(p) \\
        & \ \ \text{s.t.} \ \ p \in \mathcal{E}, \\
        & \ \ \phantom{ \text{s.t.} \ \ } \beta(p)=I, \\
        & \ \  \phantom{ \text{s.t.} \ \ } \SF_\ell(p) \leq \sigma,
    \end{split} 
    \label{eq:GuessProba}
\end{equation}
which is similar to the one introduced by Silman et al.~\cite{SPM2013Device}, but where the amount of signalling is bounded at the behaviour level. It would be interesting to study how $G_{C}$ varies as a function of $\sigma$ and to compare it to the $P^*_{xy}(I,\chi)$ introduced in Eq. (2) in~\cite{SZB+2021Device}. We leave this as future work.

\section*{Appendix D: Relation between CHSH violation and photon indistinguishability}
\label{app:chsh_hom}

In this section, we derive the relation between the measured CHSH expression $I_\text{CHSH}$ in our setup and the photon indistinguishability characterized by the HOM visibility $V_\text{HOM}$:
\begin{equation}
I_\text{CHSH}=\sqrt{2}(V_\text{HOM}+1).
\label{eq:CHSH_HOM}
\end{equation}
This allows to set the requirements on the single photon sources used to violate Bell inequalities. We use the formalism of quantum creation operators to predict the behaviour of two partially distinguishable photons in the photonic chip used in our optical setup. We calculate the expression of the output state, and use it to compute the coincidence probabilities. We then compute the expected behaviour as a function of $V_\text{HOM}$, which yields the relation between $I_\text{CHSH}$ and $V_\text{HOM}$.

We initialise the photonic chip by injecting one photon in the spatial mode $0_A$ and a second one in mode $0_B$ (see Fig.\ 1 in main article). We assume in addition that the two injected photons are not necessarily identical. They could for instance not arrive exactly at the same time in the chip, have slightly different wavelengths or not share the same polarisation. The information about these degrees of freedom is encoded in configuration wavefunctions $\ket{\alpha}$ for the first photon and $\ket{\beta}$ for the second one. A photon is thus completely described by its mode and configuration e.g.\ $\ket{0_B:\alpha}$ for a photon in mode $0_B$ and configuration $\ket{\alpha}$. The wavefunction overlap is then $\braket{\alpha}{\beta}$. If the photons are completely distinguishable, $\braket{\alpha}{\beta}=0$ and on the contrary, if they are identical $\braket{\alpha}{\beta}=1$ (up to a phase).

We denote by $a^\dagger_{0,\alpha}$, $a^\dagger_{1,\alpha}$, $b^\dagger_{0,\beta}$ and $b^\dagger_{0,\beta}$ and creation operators respectively associated to the states $\ket{0_A:\alpha}$, $\ket{1_A:\alpha}$, $\ket{0_B:\beta}$ and $\ket{1_B:\beta}$. We use the notation $\ket{\text{vac}}$ for the vacuum state. The chip input state is then $\ket{\psi_0}=\ket{0_A:\alpha,0_B:\beta} = a^\dagger_{0,\alpha}b^\dagger_{0,\beta}\ket{\text{vac}}$. The first part of the chip is dedicated to generating a Bell state. We can write the full quantum state after the first column of symmetric beamsplitters and the swap operation, before post-selection, as:
\begin{multline}
    \ket{\psi_1} = \\ \\ \left(
    \frac{1}{2} a^\dagger_{0,\alpha} a^\dagger_{1,\beta}
    + \frac{i}{2} a^\dagger_{0,\alpha} b^\dagger_{1,\beta} 
    + \frac{i}{2} b^\dagger_{0,\alpha} a^\dagger_{1,\beta}
    - \frac{1}{2} b^\dagger_{0,\alpha} b^\dagger_{1,\beta}
    \right) \ket{\text{vac}}
\end{multline}
The Bell state is generated from this state by post-selection, by keeping the results where Alice and Bob simultaneously measure a photon, which yields:
\begin{equation}
    \ket{\psi_{\text{ent}}} = 
    \frac{1}{\sqrt{2}} a^\dagger_{0,\alpha} b^\dagger_{1,\beta}
    \ket{\text{vac}}
    + \frac{1}{\sqrt{2}} b^\dagger_{0,\alpha} a^\dagger_{1,\beta}
     \ket{\text{vac}},
\end{equation}
where we discarded the global phase factor $i$.

When we set the phase shift $\psi=0$, as it the case in our experiment, the unitary matrix associated to Alice and Bob's interferometer is (see Eq.\ \ref{eq:unitary}): 
\begin{equation}
    \widehat{U}(\phi, \psi=0) = 
    \left[\begin{array}{ll}
    \sin\left(\phi/2\right) &
    \cos\left(\phi/2\right) \\
    \cos\left(\phi/2\right) &
    -\sin\left(\phi/2\right)
    \end{array}\right].
\end{equation}
where we discarded the global phase factor, because the interferometer output are connected to detectors. The quantum state at the exit of the chip is then:
\begin{multline}
    \ket{\psi_{\text{out}}} = 
    \frac{1}{\sqrt{2}}
    \Big[
    \sin \left( \frac{\phi^A}{2}  \right) a^\dagger_{0,\alpha}
    +\cos \left( \frac{\phi^A}{2}  \right) a^\dagger_{1,\alpha}
    \Big]
    \\ \times
    \Big[
    \cos \left( \frac{\phi^B}{2}  \right) b^\dagger_{0,\beta}
    -\sin \left( \frac{\phi^B}{2}  \right) b^\dagger_{1,\beta}
    \Big]
    \ket{\text{vac}} \\
    +
    \frac{1}{\sqrt{2}}
    \Big[
    \sin \left( \frac{\phi^B}{2}  \right) b^\dagger_{0,\alpha}
    +\cos \left( \frac{\phi^B}{2}  \right) b^\dagger_{1,\alpha}
    \Big]
    \\ \times
    \Big[
    \cos \left( \frac{\phi^A}{2}  \right) a^\dagger_{0,\beta}
    -\sin \left( \frac{\phi^A}{2}  \right) a^\dagger_{1,\beta}
    \Big]
    \ket{\text{vac}}.
\end{multline}
Writing
$\ket{\beta} = c_\parallel\ket{\alpha} +c_\perp\ket{\alpha_\perp}
$ with $\ket{\alpha_\perp}$ a unit vector such that $\braket{\alpha}{\alpha_\perp}=0$, and $|c_\parallel|^2+|c_\perp|^2=1$, we have $a^\dagger_\beta = c_\parallel a^\dagger_\alpha + c_\perp a^\dagger_{\alpha_\perp}$ and the output state is equal to
\begin{multline}
    \ket{\psi_{\text{out}}} = 
    \frac{1}{\sqrt{2}}
    \Big[
    \sin \left( \frac{\phi^A}{2}  \right) a^\dagger_{0,\alpha}
    +\cos \left( \frac{\phi^A}{2}  \right) a^\dagger_{1,\alpha}
    \Big]
    \\ \times
    \Big[
    c_\parallel \cos \left( \frac{\phi^B}{2}  \right) b^\dagger_{0,\alpha}
    +c_\perp \cos \left( \frac{\phi^B}{2}  \right) b^\dagger_{0,\alpha_\perp}
    \\
    -c_\parallel \sin \left( \frac{\phi^B}{2}  \right) b^\dagger_{1,\alpha}
    -c_\perp \sin \left( \frac{\phi^B}{2}  \right) b^\dagger_{1,\alpha_\perp}
    \Big]
    \ket{\text{vac}} \\
    +
    \frac{1}{\sqrt{2}}
    \Big[
    \sin \left( \frac{\phi^B}{2}  \right) b^\dagger_{0,\alpha}
    +\cos \left( \frac{\phi^B}{2}  \right) b^\dagger_{1,\alpha}
    \Big]
    \\
    \Big[
    c_\parallel \cos \left( \frac{\phi^A}{2}  \right) a^\dagger_{0,\alpha}
    + c_\perp \cos \left( \frac{\phi^A}{2}  \right) a^\dagger_{0,\alpha_\perp}
    \\
    - c_\parallel \sin \left( \frac{\phi^A}{2}  \right) a^\dagger_{1,\alpha}
    - c_\perp \sin \left( \frac{\phi^A}{2}  \right) a^\dagger_{1,\alpha_\perp}
    \Big]
    \ket{\text{vac}}.
\end{multline}

In the protocol, the chip modes $0_A$ and $1_A$ correspond to Alice's results 0 and 1 respectively, and the modes $0_B$ and $1_B$ correspond to Bob's results 0 and 1. Let $p(a,b|\phi^A, \phi^B)$ be the probability that Alice measures $a$ and Bob measures $b$ at the same time, knowing that their interferometer phases are $\phi^A$ and $\phi^B$ respectively. We can compute it from the expression of $\ket{\psi_{\text{out}}}$ by pairing the creation operators according to $(a,b)$ and summing the modulus square of the coefficients. For instance, to compute $p(0,0|\phi^A, \phi^B)$, we identify the pairs $a^\dagger_{0,\alpha} b^\dagger_{0,\alpha}$, $a^\dagger_{0,\alpha} b^\dagger_{0, \alpha_\perp}$, $b^\dagger_{0,\alpha} a^\dagger_{0,\alpha}$ and $b^\dagger_{0,\alpha} a^\dagger_{0,\alpha_\perp}$ in $\ket{\psi_\text{out}}$. Each of these pairs of creation operators, when applied on $\ket{\text{vac}}$, yields a state where Alice and Bob both measure the result 0. Notice that $a^\dagger_{0,\alpha} b^\dagger_{0,\alpha}=b^\dagger_{0,\alpha} a^\dagger_{0,\alpha}$. We thus obtain:
\begin{multline}
    p(0,0|\phi^A, \phi^B) = \\
    \bigg|
    \frac{c_\parallel}{\sqrt{2}}
    \left(
    \sin \left( \frac{\phi^A}{2}  \right)
    \cos \left( \frac{\phi^B}{2}  \right)
    + \sin \left( \frac{\phi^B}{2}  \right)
    \cos \left( \frac{\phi^A}{2}  \right)
    \right)
    \bigg|^2
    \\
    +
    \bigg|
    \frac{c_\perp}{\sqrt{2}}
    \sin \left( \frac{\phi^A}{2}  \right)
    \cos \left( \frac{\phi^B}{2}  \right)
    \bigg|^2
    \\
    +
    \bigg|
    \frac{c_\perp}{\sqrt{2}}
    \sin \left( \frac{\phi^B}{2}  \right)
    \cos \left( \frac{\phi^A}{2}  \right)
    \bigg|^2
\end{multline}
which amounts to:

\begin{multline}
    p(0,0|\phi^A, \phi^B) = p(1,1|\phi^A, \phi^B) = \\ \\
    \frac{|c_\parallel|^2-1}{8} \cos \left( \phi^A -\phi^B  \right)
    -
    \frac{|c_\parallel|^2+1}{8} \cos \left( \phi^A +\phi^B  \right)
    +
    \frac{1}{4},
\end{multline}

\begin{multline}
    p(0,1|\phi^A, \phi^B) = p(1,0|\phi^A, \phi^B) = \\ \\
    -\frac{|c_\parallel|^2-1}{8} \cos \left( \phi^A -\phi^B  \right)
    +
    \frac{|c_\parallel|^2+1}{8} \cos \left( \phi^A +\phi^B  \right)
    +
    \frac{1}{4}.
\end{multline}

Photon indistinguishability is commonly quantified with the HOM visibility. Consider a symmetric beamsplitter. We inject two photons, one in each beamsplitter input, and measure the probability $p_{\text{coinc}}$ of measuring a coincidence, that is two photons going out of the beamsplitter on different outputs. The HOM visibility is then defined as $V_{\text{HOM}}=1-2p_{\text{coinc}}$. Indistinguishable photons will in this case always both come out of the same beamsplitter output, and no coincidences will be measured, yielding $V_{\text{HOM}}=1$. For perfectly distinguishable photons, $p_{\text{coinc}}=1/2$ and $V_{\text{HOM}}=0$.

To relate $c_\parallel$ to $V_{\text{HOM}}$, we use the same mathematical treatment as above, i.e. we write the input state with creation operators, propagate the state in a symmetric beamsplitter and extract the probability of measuring one photon on each beamsplitter output, yielding $V_{\text{HOM}} = |\braket{\alpha}{\beta}|^2 = |c_\parallel|^2 $.

The expressions output probabilities as a function of $V_{\text{HOM}}$ are then: 

\begin{multline}
    p(0,0|\phi^A, \phi^B) = p(1,1|\phi^A, \phi^B) = \\ \\
    \frac{V_{\text{HOM}}-1}{8} \cos \left( \phi^A -\phi^B  \right)
    -
    \frac{V_{\text{HOM}}+1}{8} \cos \left( \phi^A +\phi^B  \right)
    +
    \frac{1}{4}
    \label{eq:hom_1}
\end{multline}

\begin{multline}
    p(0,1|\phi^A, \phi^B) = p(1,0|\phi^A, \phi^B) = \\ \\
    -\frac{V_{\text{HOM}}-1}{8} \cos \left( \phi^A -\phi^B  \right)
    +
    \frac{V_{\text{HOM}}+1}{8} \cos \left( \phi^A +\phi^B  \right)
    +
    \frac{1}{4}.
    \label{eq:hom_2}
\end{multline}

\begin{table}
	\centering
	\begin{tabular}{cc||cccc}
		 a & b & $00$ & $01$ & $10$ & $11$ \\ \hline
		 $0$ & $0$ & $\frac{1}{4}+\frac{\sqrt{2}}{8}V_{\text{HOM}}$    & $\frac{1}{4}-\frac{\sqrt{2}}{8}V_{\text{HOM}}$    & $\frac{1}{4}-\frac{\sqrt{2}}{8}$    & $\frac{1}{4}-\frac{\sqrt{2}}{8}$  \\
		 $0$ & $1$ & $\frac{1}{4}-\frac{\sqrt{2}}{8}V_{\text{HOM}}$    & $\frac{1}{4}+\frac{\sqrt{2}}{8}V_{\text{HOM}}$    & $\frac{1}{4}+\frac{\sqrt{2}}{8}$    & $\frac{1}{4}+\frac{\sqrt{2}}{8}$ \\
		 $1$ & $0$ & $\frac{1}{4}-\frac{\sqrt{2}}{8}V_{\text{HOM}}$    & $\frac{1}{4}+\frac{\sqrt{2}}{8}V_{\text{HOM}}$    & $\frac{1}{4}+\frac{\sqrt{2}}{8}$    & $\frac{1}{4}+\frac{\sqrt{2}}{8}$ \\
		 $1$ & $1$ & $\frac{1}{4}+\frac{\sqrt{2}}{8}V_{\text{HOM}}$    & $\frac{1}{4}-\frac{\sqrt{2}}{8}V_{\text{HOM}}$    & $\frac{1}{4}-\frac{\sqrt{2}}{8}$   & $\frac{1}{4}-\frac{\sqrt{2}}{8}$
	\end{tabular}
\caption{}
\label{tab:behaviour_expr}
\end{table}

The corresponding behaviour is displayed in Table~\ref{tab:behaviour_expr}, and its associated CHSH value is:
\begin{equation}
    I_\text{CHSH} = \sqrt{2}(V_\text{HOM}+1).
\end{equation}

\section*{Appendix E: Photonic chip crosstalk matrices}
\label{app:crosstalk}

Heat generated by Alice's and Bob's heating resistors propagates in the chip. We denote $V_1$, $V_2$, $V_3$ and $V_4$ the voltages applied respectively on the phase shifters $\psi^A_x$, $\psi^B_y$, $\phi^A_x$ and $\phi^B_x$ (see Fig.\ \ref{subfig:chip_scheme}). The full characterization of the phases implemented by the phase shifters as a function of applied voltages is given with a typical error on the order of 0.1 rad by

\begin{multline}
    \begin{bmatrix}
    \phi^{(A)}_Z \\
    \phi^{(B)}_Z 
    \end{bmatrix}
    =
    \begin{bmatrix}
    1.2890 & -0.0785 \\
    0.0988 & -1.2777
    \end{bmatrix}
    \begin{bmatrix}
    V_1^2 \\
    V_2^2
    \end{bmatrix} 
    \\
    -
    \begin{bmatrix}
    0.0192 & -0.0009 \\
    0.0012 & -0.0203
    \end{bmatrix}
    \begin{bmatrix}
    V_1^4 \\
    V_2^4
    \end{bmatrix}
    \label{eq:Z}
\end{multline}

\begin{multline}
    \begin{bmatrix}
    \phi^{(A)}_Y \\
    \phi^{(B)}_Y 
    \end{bmatrix}
    =
    \begin{bmatrix}
    0.2703 \\
    -0.2799 
    \end{bmatrix}
    +
    \begin{bmatrix}
    1.4693 & -0.1111 \\
    0.1120 & -1.4776
    \end{bmatrix}
    \begin{bmatrix}
    V_3^2 \\
    V_4^2
    \end{bmatrix} 
    \\
    -
    \begin{bmatrix}
    0.0351 & -0.0026 \\
    0.0027 & -0.0343
    \end{bmatrix}
    \begin{bmatrix}
    V_3^4 \\
    V_4^4
    \end{bmatrix}
    \label{eq:Y}
\end{multline}

\noindent
 where the phases are expressed in radians as a function of the voltages in Volts applied on each heating resistor. Injecting a continuous diode laser in the chip and measuring the outputs with photodiodes confirms that there is 
 measurable thermal crosstalk only between resistors belonging to the same column on Fig.\ \ref{subfig:chip_scheme} in the main text.

\section*{Appendix F: Phase calibration protocol}
\label{app:calibration}

The goal of the voltage calibration protocol is
to compute the voltages to apply on Alice's and Bob's Mach-Zehnder interferometer (MZI) phase, such that the measurement contexts presented in Table \ref{table:context_table} can be implemented with high precision. 
The voltages are regularly calibrated during the execution of the protocol because the polarisation of photons in the fibers between the chip and the detectors fluctuate, causing detection efficiency fluctuations on the detectors which are sensitive to polarisation. We compensate for these by shifting the MZI phase. Because we cannot rely on the phases computed from the crosstalk matrix relations (Eqs.~\eqref{eq:Z},~\eqref{eq:Y}) to accurately apply a phase on the interferometer, we use Alice's and Bob's MZI splitting as a measure of the implemented phase (see Fig.~\ref{fig:MZI}). To do so, the motorized shutter in Fig. \ref{fig:setup} is closed, such that only the upper input modes of Alice's and Bob's MZI provide photons. 

\begin{figure}
\includegraphics[height=1.5cm]{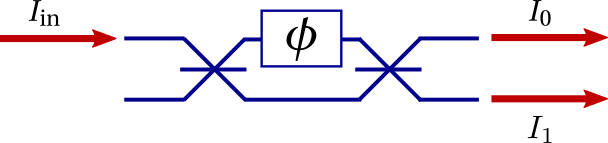}
\centering
\caption{Mach-Zehnder interferometer (MZI) in the configuration used to calibrate the voltages. $I_{in}$ is the input power and $I_0, I_1$ the output ones.  The MZI splitting  $n_0=I_0/(I_0+I_1)$ is related to the MZI's phase $\phi$ by $n_0(\phi) = \sin^2 \left( \frac{\phi}{2} \right)$ according to Eq.\ \ref{eq:unitary}.}
\label{fig:MZI}
\end{figure}
\subsection{Relative detector efficiency measurement.}
First, $\phi^A$ and $\phi^B$ are swept simultaneously while recording the countrate on Alice's and Bob's outputs, which produces the data presented in Fig.~\ref{fig:MZI_sweep}. 

\begin{figure}
\includegraphics[width=\textwidth]{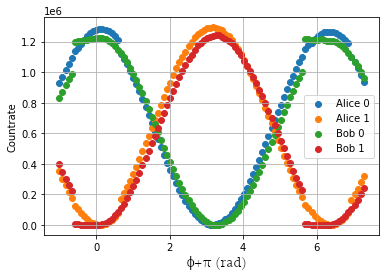}
\centering
\caption{Measured count rates on Alice’s and Bob’s outputs  while sweeping $\phi^A$ and $\phi^B$. Notice that the heating resistors do not achieve a full $2\pi$ sweep.}
\label{fig:MZI_sweep}
\end{figure}

We call $N^A_{0,\text{max}}$ the maximum countrate recorded during the sweep by Alice on her 0 output, and similarly for $N^A_{1,\text{max}}$, $N^B_{0,\text{max}}$ and $N^B_{1,\text{max}}$. Instead of using the raw detector countrates $N^A_0$, we worked with the normalized countrates $\tilde{N}^A_0 = N^A_0/N^A_{0,\text{max}}$ to compensate for the different detector efficiencies. The corrected MZI splittings are 
$\tilde{n}^A_0 = \tilde{N}^A_0/(\tilde{N}^A_0+\tilde{N}^A_1), 
\tilde{n}^B_0 = \tilde{N}^B_0/(\tilde{N}^B_0+\tilde{N}^B_1)$
\footnote{It was realised after the data acquisition that the detector efficiency should in fact not be accounted for, because correcting for that efficiency introduces biases in the measured empirical table that reduce the CHSH violation. In practice, the detector efficiencies differ only at the third significant digit, so the impact on the obtained results is negligible: correcting by these biases diminishes the maximum obtainable CHSH violation by only $10^{-4}$ (simulation using the Perceval package in Python). In addition, in principle these biases do not increase the signaling fraction. Therefore, we can still exploit the results from our experiment.}.

\subsection{Local phase sweeps}

For each measurement context, $\phi^A$ and $\phi^B$ are swept simultaneously around the target phase while recording the corrected MZI splittings. The target phases and MZI splittings for each context are displayed in Table~\ref{tab:target_phases}. From the measured MZI splittings, we deduce the measured interferometer phase:
$\phi^A_{\text{measured}} = 2 \arcsin(\sqrt{\tilde{n}^A_0})$, $\phi^B_{\text{measured}} = 2 \arcsin(\sqrt{\tilde{n}^B_0}).$ 
The plots of the measured phases as a function of the input phases are displayed in Fig.~\ref{fig:phases}. The data is processed with a linear or triangular fit depending on the context, and the input phase that should be applied to the chip is retrieved from the vertical lines on the plots. 

\begin{table}
\begin{tabular}{lllll}
\hline
Measurement \\ context & Target $\phi^A$ & Target $n^A_0$ & Target $\phi^B$ & Target $n^B_0$ \\ \hline
(x=0,y=0)       & $-\pi/2   $                        & $0.5  $                                &$ -\pi/4    $                    &$ 0.146    $                          \\
(x=0,y=1)       & $-\pi/2 $                          & $0.5$                                  & $\pi/4  $                      &$ 0.146  $                            \\
(x=1,y=0)       & $0  $                           & $0  $                                  & $-\pi/4  $                      & $0.146  $                            \\
(x=1,y=1)       & $0$                             & $0$                                    & $\pi/4$                        & $0.146$                              \\ \hline
\end{tabular}
\caption{Target phases and MZI splittings for each measurement context.}
\label{tab:target_phases}
\end{table}

\begin{figure*}
\includegraphics[width=0.85\textwidth]{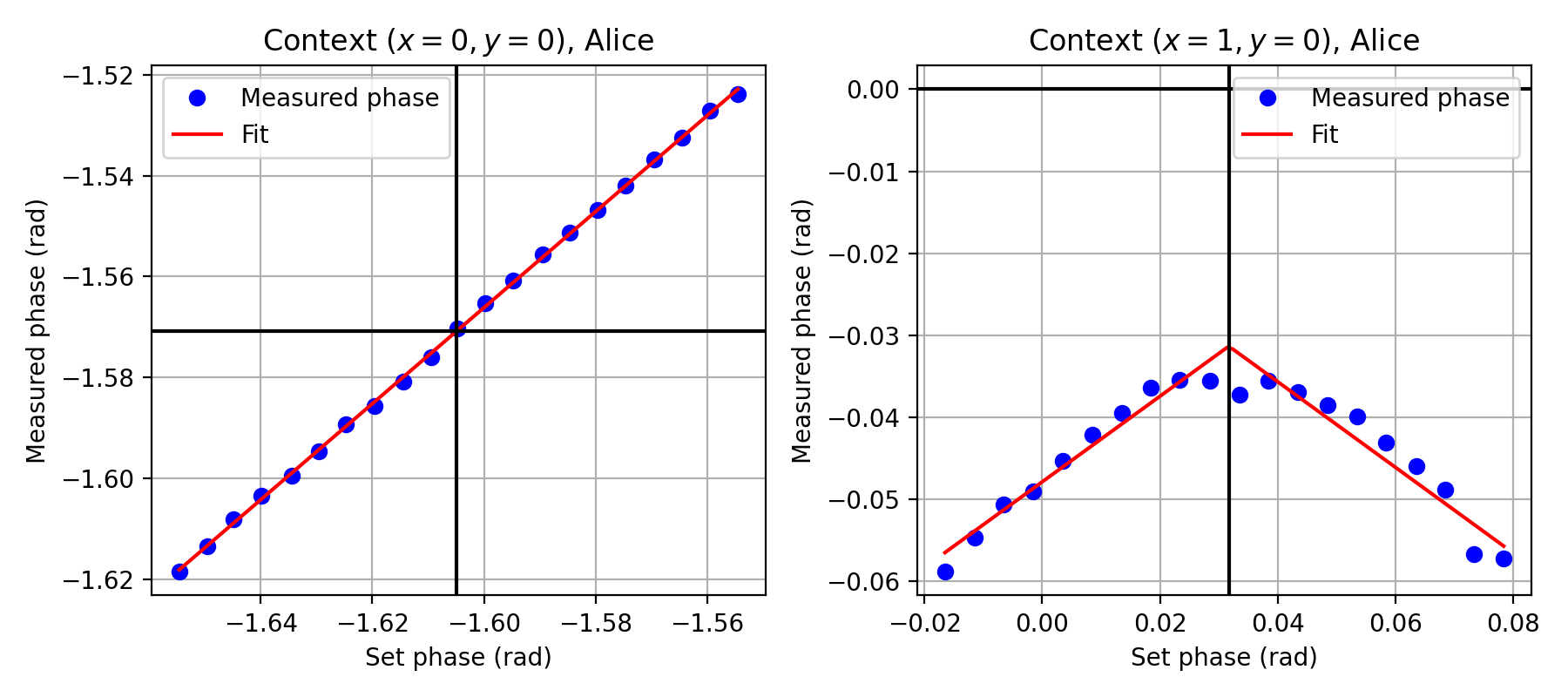}
\centering
\caption{For each measurement context, the input phase of Alice's and Bob's MZI is swept around the target phase and we record the measured phase from the MZI splitting. Data is fitted with a line or a triangle depending on the measurement context. Here we show the result for Alice in contexts 00 and 01. The horizontal black line indicates the target phase for each context, and the vertical one represents the input phase that should be used to implement the context, knowing that in general the input phase is not equal to the measured phase. Notice that for Alice, in contexts $(x=1,y=0)$ and  $(x=1,y=1)$, the measured phase should be $\pi$ at the triangle’s peak. It does not because of dark counts, which prevent the measured interferometer balance from going to 0, and thus we rely on a triangular fit of the sweep.}
\label{fig:phases}
\end{figure*}

\subsection{Phase drifts}

We measured the phases of Alice's and Bob's MZI for 14 hours using always the same voltages to implement the four measurement contexts, and by cycling through them in the order ($x=0,y=0$), ($x=0,y=1$), ($x=1,y=0$) and ($x=1,x=1$). The results are summarized in Fig. \ref{fig:mzi_stability_1}. We observe overall a typical drift of the implemented phase of the order of $\SI{0.25}{mrad\per\hour}$. 

\begin{figure*}
\includegraphics[width=\textwidth]{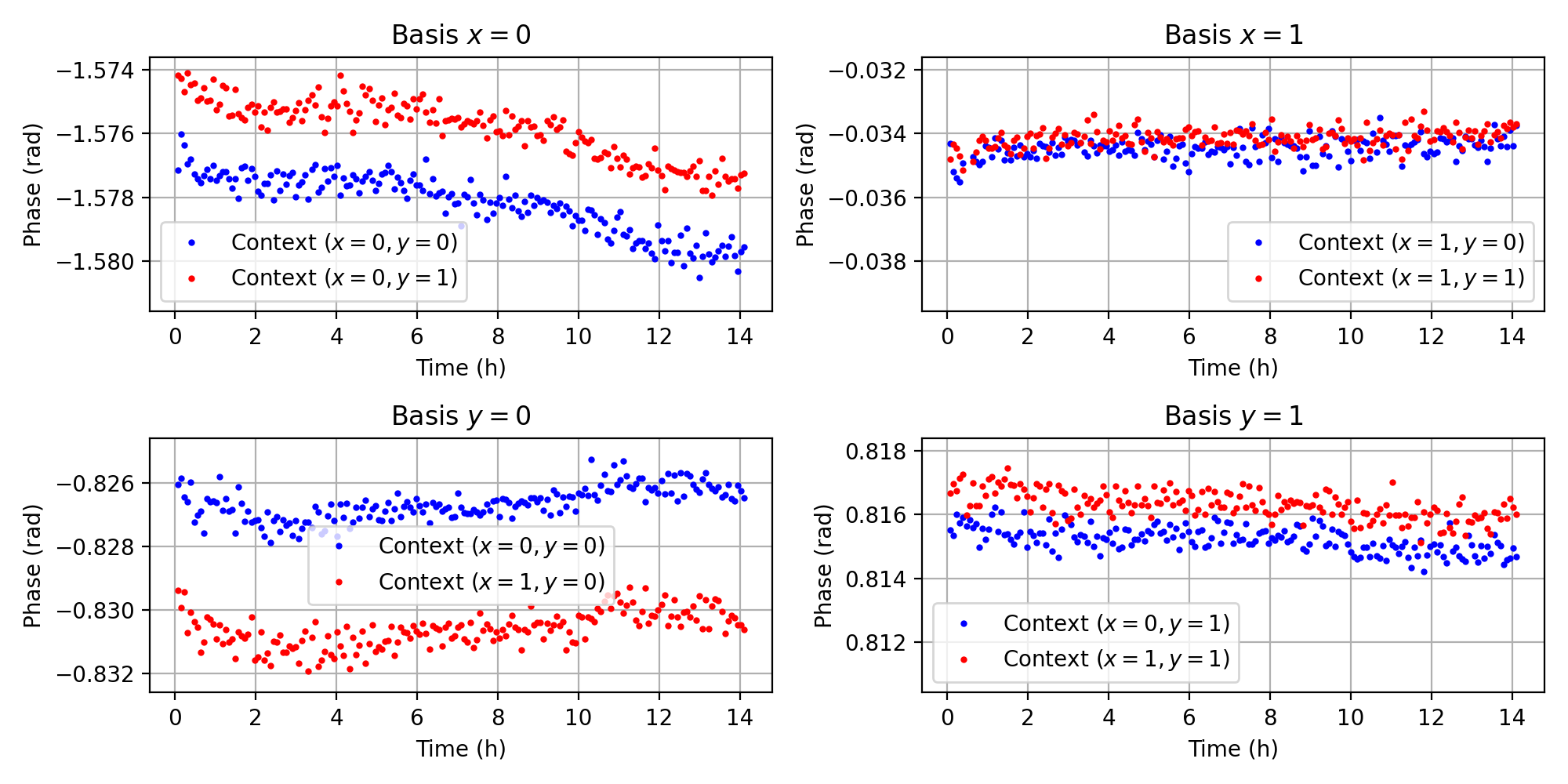}
\centering
\caption{For each measurement context, the current MZI phase is measured using the MZI’s balance at Alice's and Bob's outputs. For each plot, the vertical scale is 2 mrad per division.}
\label{fig:mzi_stability_1}
\end{figure*}

\subsection{Phase stabilization}

In Fig.~\ref{fig:mzi_stability_2} we show Alice's and Bob's MZI phases measured every 6 hours before and after calibration during our main experiment. As a result of these frequent calibrations, the implemented phases stayed confined in an interval of 3 mrad around the targets.

\begin{figure*}
\includegraphics[width=\textwidth]{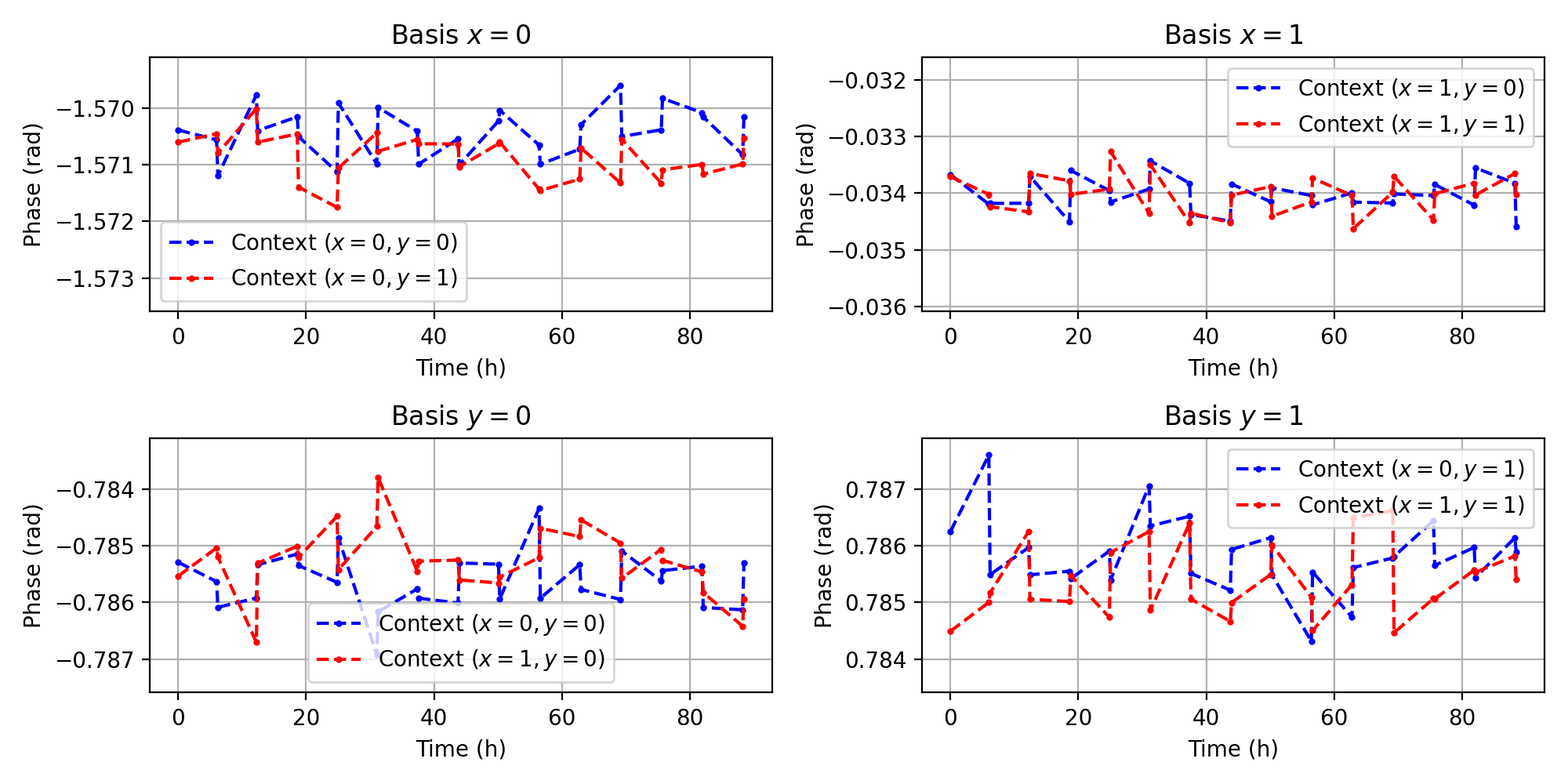}
\centering
\caption{For each measurement context, the current MZI phase is measured every 6 hours before and after a voltage calibration using the MZI’s balance at Alice's and Bob's outputs. For each plot, the vertical scale is 1 mrad per division.}
\label{fig:mzi_stability_2}
\end{figure*} 

\end{document}